\newcommand{\bydef}{\triangleq}
\def\SNR{{\textsf{SNR}}}
\def\bydef{:=}
\def\bb0{{\mathbb{0}}}
\def\bydef{:=}
\def\bb{{\mathbf{b}}}
\def\bff{{\mathbf{f}}}
\def\bg{{\mathbf{g}}}
\def\bh{{\mathbf{h}}}
\def\bq{{\mathbf{q}}}
\def\bt{{\mathbf{t}}}
\def\bu{{\mathbf{u}}}
\def\bv{{\mathbf{v}}}
\def\b0{{\mathbf{0}}}
\def\bA{{\mathbf{A}}}
\def\bB{{\mathbf{B}}}
\def\bQ{{\mathbf{Q}}}
\def\bR{{\mathbf{R}}}
\def\bS{{\mathbf{S}}}
\def\bbC{{\mathbb{C}}}
\def\bbE{{\mathbb{E}}}
\def\bbN{{\mathbb{N}}}
\def\bydef{:=}
\def\sf0{{\mathsf{0}}}
\begin{document}

\newtheorem{thm}{Theorem}
\newtheorem{lemma}{Lemma}
\newtheorem{rem}{Remark}
\newtheorem{exm}{Example}
\newtheorem{prop}{Proposition}
\newtheorem{defn}{Definition}
\newtheorem{cor}{Corollary}
\def\proof{\noindent\hspace{0em}{\itshape Proof: }}
\def\endproof{\hspace*{\fill}~\QED\par\endtrivlist\unskip}
\def\bh{{\mathbf{h}}}
\def\SNR{{\mathsf{SNR}}}
\title{Transmission Capacity of Spectrum Sharing Ad-hoc Networks with Multiple Antennas}
\author{
Rahul~Vaze
\thanks{Rahul~Vaze is with the School of Technology and Computer Science, Tata Institute of Fundamental Research, Homi Bhabha Road, Mumbai 400005, vaze@tcs.tifr.res.in. }}

\date{}
\maketitle
\noindent
\begin{abstract}
Two coexisting ad-hoc networks, primary and secondary, are considered, where  each node of the primary 
network has a single antenna, while each node of the secondary network is equipped with multiple antennas. 
Using multiple antennas, each secondary transmitter uses some of its  spatial transmit degrees of freedom (STDOF) to null its interference towards the primary receivers, while each secondary receiver employs interference cancelation using some of its spatial receive degrees of freedom (SRDOF).
This paper derives the optimal STDOF for nulling and SRDOF for interference cancelation that maximize the scaling of the  transmission capacity of the secondary network with respect to the number of antennas, when the secondary network 
operates under an outage constraint at the primary receivers. With a single receive antenna, using a fraction of the total STDOF for nulling at each secondary transmitter maximizes the transmission capacity. With multiple transmit and receive antennas and fixing all but one STDOF for nulling, using a fraction of the total SRDOF to cancel the nearest interferers maximizes the transmission capacity of the secondary network.
\end{abstract}

\section{Introduction}
With ever increasing demand for bandwidth, extensive
research has focussed on the intelligent usage of available spectrum \cite{FCC, Devroye2006, Devroye2006a, Jovicic2009}. One of the key ideas to improve
the spectrum utilization is the coexistence of secondary networks together with the primary/licensed network, e.g.  the
use of cognitive radios. Cognitive radios are intelligent devices which
continuously sense the spectrum and schedule transmissions in unutilized frequency bands under a constraint 
on the interference they cause to the primary receivers. 
Spectrum efficiency  can be further improved by using multiple antennas at  cognitive radios, where multiple antennas are used for 
nulling the interference they cause to the primary receivers
or for transmit beamforming or receive interference cancelation  \cite{Telatar1999, Tarokh1999a,ZhangCog2008,KannanCog2009}. 

In this paper we consider the coexistence of two ad-hoc wireless  networks  (primary/licensed and secondary). 
In an ad-hoc wireless network, multiple transmitter-receiver pairs
communicate simultaneously in an uncoordinated manner without the help of
any fixed infrastructure. The primary ad-hoc network is assumed to be oblivious to the presence of the secondary ad-hoc network, and  the secondary ad-hoc network operates under an outage constraint at the primary receivers.
Each node of the primary network is assumed to have a single antenna, while each node of the secondary network is equipped with multiple antennas. We are interested in answering the question: how does the  transmission capacity of the secondary network scale with multiple antennas at secondary nodes, where the transmission capacity
is the maximum allowable intensity of nodes, satisfying a per transmitter receiver rate, and outage probability constraint \cite{Weber2005, Weber2007, Weber2008, Baccelli2006}.

In prior work, the throughput scaling of secondary networks with respect to the number of secondary nodes under an outage constraint at the primary receivers has been studied in \cite{VuCog2009, VuCog2009a, CuiCog2010}.
With a single transmit and receive antenna at the secondary nodes, upper and lower bounds on the transmission capacity of the secondary network have been derived in \cite{HuangCog2009}, 
while an exact transmission capacity expression of the secondary network has been derived in \cite{CuiCog2009} when the  path-loss exponent is four. For a single secondary transmitter-receiver pair, opportunistic spectrum sharing using multiple antennas has been proposed and analyzed in \cite{ZhangCog2008,KannanCog2009}.
To the best of knowledge, however, no work has been reported on the scaling of the  transmission capacity of the secondary networks with respect to the number of antennas available at secondary nodes.

In this paper we assume that each secondary transmitter has $N$ antennas, while each secondary receiver has $M$ antennas.
Each secondary transmitter is assumed to send a single data stream through its multiple antennas.
Multiple  antennas at each secondary transmitter are used for partial nulling, where some spatial transmit degrees of freedom (STDOF) are used for nulling its interference towards the primary receivers, and the rest of the STDOF are used for beamforming towards its corresponding secondary receiver. Similarly, multiple  antennas at each secondary receiver are used for partial interference cancelation, where some spatial receive degrees of freedom (SRDOF) are used for canceling the interference from both the primary and secondary transmitters, and the rest SRDOF are used to  increase the strength of the signal of interest. Our results are summarized as follows. 
\begin{itemize}
\item {\bf $N=M=1$:} We derive an exact expression for the secondary transmission capacity. We characterize the increase in the transmission capacity of the secondary network with respect to the  increase in the allowed outage probability tolerance at the primary receivers.

\item {\bf Arbitrary $N$, $M=1$:} 
Using a fraction of the total STDOF at each secondary transmitter for nulling its interference towards the nearest primary receivers, and the rest of the STDOF for transmit beamforming maximizes the upper and lower bound on the secondary transmission capacity. The secondary transmission capacity lower bound scales as  $ \min\{N^{\frac{2}{\alpha}}, N^{1-\frac{2}{\alpha}} \}$, and the upper bound 
scales as $N^{\frac{2}{\alpha}}$, where $\alpha$ is the path-loss exponent.

\item {\bf $N=1$, Arbitrary $M$:} 
The transmission capacity is independent of $M$. 

\item {\bf Arbitrary $N$ and $M$ ($M\ge N$):}
With $N-1$ STDOF for interference nulling at each secondary transmitter,  using a fraction of the total SRDOF at each secondary receiver for 
canceling the nearest interferers at each secondary receiver,  maximizes the upper and lower bound on the secondary transmission capacity.
With $M=N$, the secondary transmission capacity lower bound scales as  $  N^{1-\frac{2}{\alpha}}$, and the upper bound 
scales linearly in $N$.

\end{itemize}

Our results show that the transmission capacity of the secondary network scales sublinearly with the number of transmit antennas with or without multiple  antennas at each secondary receiver.   
In comparison, when no secondary nodes are present, the transmission capacity of the primary network  scales 
linearly with the number of receive antennas even with a single transmit antenna \cite{Jindal2008a, 
Vaze2009}. The  transmission capacity with coexisting networks is reduced because the secondary network is required to operate under two outage constraints: one at the
primary receivers, and the other at the secondary receivers. 

Another interesting thing to note is the role of transmit antennas in the secondary network. Our results show that 
with a single transmit antenna, the transmission capacity of the secondary network does not scale with the number of receive antennas. 
This is in contrast to the result of \cite{Jindal2008a, Vaze2009}, where without the secondary network, the transmission capacity with a single transmit antenna is shown to scale linearly with the number of receive antennas. 
Our result can be explained by noting that with only a single transmit antenna, none of 
secondary transmitters can null their interference towards any of the primary receivers, and hence the transmission capacity is bottlenecked by the outage constraint at the
primary receivers. Consequently, the transmission capacity of the secondary network is independent of the number of  antennas at the secondary  receivers. 
Thus, to maximize the transmission capacity of the secondary network, the number of secondary transmit antennas should be similar  to the number of secondary receive antennas.

{\it Notation:}
Let ${\bA}$ denote a matrix, ${\bf a}$ a vector and
$a_i$ the $i^{th}$ element of ${\bf a}$. Transpose and conjugate transpose is denoted by $^T$, and $^*$, respectively. 
The expectation of function $f(x)$ with respect to $x$ is denoted by
${\bbE}(f(x))$.
A circularly symmetric complex Gaussian random
variable $x$ with zero mean and variance $\sigma^2$ is denoted as $x
\sim {\cal CN}(0,\sigma^2)$. Let $S_1$ be a set and $S_2$ be a subset of $S_1$. Then $S_2 \backslash S_1$ denotes the set of elements of $S_1$ that do not belong to $S_2$. Let $f(n)$ and $g(n)$ be two function defined on some subset of real numbers.
Then we write $f(n) = \Omega(g(n))$ if
$\exists \ k > 0, \ n_0, \ \forall \ n>n_0$, $|g(n)| k \le |f(n)|$,
$f(n) = {\cal O}(g(n))$ if $\exists \ k > 0, \ n_0, \ \forall \ n>n_0$, $|f(n)| \le |g(n)| k$, and
$f(n) = \Theta(g(n))$ if $\exists \ k_1, \ k_2 > 0, \ n_0, \ \forall \ n>n_0$,
$|g(n)| k_1 \le |f(n)| \le |g(n)| k_2$. We use the symbol
$\bydef$  to define a variable.

\section{System Model}
\label{sec:sys} Consider an ad-hoc network with two sets of nodes: primary and secondary. 
Each primary and secondary transmitter has a primary and secondary receiver associated with it, located at distance $d_p$ and $d_s$ in random direction, respectively. The primary nodes are oblivious to the presence of secondary nodes. The secondary nodes (both transmitters and receivers) are aware of the primary nodes, and try to maximize the transmission capacity \cite{Weber2005} of the secondary network, subject to a constraint on the added outage probability they cause at any primary receiver. 
The locations of primary and secondary transmitters are modeled as two independent homogenous Poisson point
processes (PPPs) on a two-dimensional plane with intensity $\lambda_1$, and $\lambda_2$, respectively.
We consider a slotted ALOHA like random access protocol, where each transmitter attempts to transmit with an access probability $P_a$, independently of all other transmitters. Consequently, the active primary and secondary transmitter processes are also homogenous
PPPs on a two-dimensional plane with intensity
$\lambda_p = P_a\lambda_1$, and $\lambda_s = P_a\lambda_2$. 
Let the location of the $n^{th}$ active
primary transmitter be $T_{pn}$, and the $n^{th}$ active secondary transmitter be $T_{sn}$. The set of all active primary and secondary 
transmitters is denoted by $\Phi_p = \left\{T_{pn}, \ n\in \bbN\right\}$ and $\Phi_s= \left\{T_{sn}, \ n\in \bbN\right\}$, respectively.  The coexisting ad hoc networks under consideration is illustrated in Fig. \ref{fig:blkdiag}, where the red dots represent the primary transmitters and receivers, while the blue dots represent the secondary transmitters and receivers.
We assume that each primary transmitter and receiver has a single antenna, while 
each secondary transmitter has $N$ antennas, and each secondary receiver has $M$ antennas. 
We restrict ourselves to the case when each secondary transmitter transmits only one data stream through its multiple antennas. 

The received signal at the primary receiver $R_{p0}$ is
\begin{eqnarray}\label{rxprim}
y_0 &=& \sqrt{P_p}d_p^{-\alpha/2}h_{00} x_{p0} + \sum_{n: T_{pn} \in \Phi_p \backslash \{T_{p0}\}}\sqrt{P_p}d_{pp, n}^{-\alpha/2}h_{0n}x_{pn} + 
\sum_{n:T_{sn} \in \Phi_s }\sqrt{\frac{P_s}{N}} d_{sp, n}^{-\alpha/2}\bg_{0n}\bu_nx_{sn},
\end{eqnarray}
where $P_p$ and $P_s$ is the transmit power of each primary and secondary transmitter, respectively,
$h_{0n} \in \bbC$ is the channel between $T_{pn}$ and $R_{p0}$, $\bg_{0n}\in \bbC^{1\times  N}$
is the channel between $T_{sn}$ and $R_{p0}$, 
$d_{pp, n}$ and $d_{sp, n}$ is the distance between $T_{pn}$ and $R_{p0}$, and $T_{sn}$ and $R_{p0}$, respectively, 
$\alpha$ is the path loss exponent $\alpha > 2$, $x_{pn}$ and $x_{sn}$  are  data signals transmitted from $T_{pn}$ and $T_{sn}$, respectively, with $x_{pn}, x_{sn} \sim  \ {\cal C}N(0,1)$, $\bu_n \in \bbC ^{N\times 1}$ is the beamformer used by the $n^{th}$ secondary transmitter.  We consider the interference limited regime, i.e.
noise power is negligible compared to the interference power, and drop the additive white Gaussian noise contribution \cite{Weber2005}.
We assume that each $h_{0n}$, and each entry of $\bg_{0n}$ is
i.i.d. ${\cal CN}(0,1)$ to model a richly
scattered fading channel with independent fading coefficients between different
transmitting receiving antennas.

The $ \bbC^{M\times 1}$ received signal $\bv_0$ at the secondary receiver $R_{s0}$  is
\begin{eqnarray}\label{rxcog}
\bv_0 &= & \sqrt{\frac{P_s}{N}} d_{c}^{-\alpha/2}\bQ_{00}\bu_0x_{c0} + 
\sum_{n:T_{sn} \in \Phi_s \backslash \{T_{s0}\}}\sqrt{\frac{P_s}{N}} d_{ss, n}^{-\alpha/2}\bQ_{0n}\bu_nx_{sn} +   \sum_{n:T_{pn} \in \Phi_p}\sqrt{P_p}d_{ps, n}^{-\alpha/2}\bff_{0n}x_{pn},
\end{eqnarray}
where $d_{ss, n}$ and $d_{ps, n}$ is the distance between $T_{sn}$ and $R_{s0}$, and $T_{pn}$ and $R_{s0}$, respectively, 
$\bQ_{0n} \in \bbC^{M\times N}$ is the channel between $T_{sn}$ and $R_{s0}$, $\bff_{0n}\in \bbC^{M\times 1}$
is the channel between $T_{pn}$ and $R_{s0}$.

For partial interference cancelation, the $n^{th}$ secondary receiver multiplies $\bt_n^*$ to the received signal. Thus with signal model (\ref{rxprim}), and (\ref{rxcog}), the signal-to-interference ratio (SIR) for $R_{p0}$  is 
\[SIR_p \bydef\frac{P_pd_p^{-\alpha}|h_{00}|^2}{\sum_{n:T_{pn} \in \Phi_p \backslash \{T_{p0}\}}P_pd_{pp, n}^{-\alpha}|h_{0n}|^2 + 
\sum_{n:T_{sn} \in \Phi_s }P_s d_{sp, n}^{-\alpha}|\bg_{0n}\bu_n|^2},\] and the SIR for $R_{s0}$ is 
\[SIR_s \bydef \frac{P_s d_s^{-\alpha}|\bt_0^*\bQ_{00}\bu_0|^2}{\sum_{n:T_{sn} \in \Phi_s \backslash \{T_{s0}\}}P_sd_{ss, n}^{-\alpha}|\bt_n^*\bQ_{0n}\bu_n|^2 + 
\sum_{n:T_{pn} \in \Phi_p }P_p d_{ps, n}^{-\alpha}|\bt_0^*\bff_{0n}|^2},\] respectively. We assume that $\bt_n = 1$ if $M=1$. 
Without the presence of secondary network, the SIR at the primary receiver $R_{p0}$ is 
$SIR_p^{nc} \bydef\frac{P_pd_p^{-\alpha}|h_{00}|^2}{\sum_{n:T_{pn} \in \Phi_p \backslash \{T_{p0}\}}P_pd_{pp, n}^{-\alpha}|h_{0n}|^2}$.

We assume that the rate of transmission for each primary (secondary) transmitter  is $R_p = \log(1+\beta_p)$ ($R_s = \log(1+\beta_s)$) bits/sec/Hz. Therefore,  a packet transmitted by $T_{p0}$ ($T_{s0}$) can be successfully decoded at $R_{p0}$ ($R_{s0}$), if $SIR_{p} \ge \beta_p$ ($SIR_{s} \ge \beta_s$). 
Without the presence of secondary network, for a given rate $R_p$ bits/sec/Hz, let $\lambda_p$ be the maximum intensity for which the outage probability of the 
primary network $P^{nc}_{p, out}=P\left(SIR_p^{nc} \le \beta_p\right) \le \epsilon_p^{nc}$.  Allowing secondary transmissions increases the interference received at $R_{p0}$ as quantified in $SIR_p$ compared to $SIR_p^{nc}$, and thereby  increases the outage probability from $P_{p, out}^{nc}
$ to $P_{p, out}=P\left(SIR_p \le \beta_p\right) $ for a fixed $\lambda_p$. Let the increased outage probability tolerance at the primary receivers be  $\epsilon_p^{nc} +\Delta_p$. Then we want to find the 
maximum intensity of secondary transmitters $\lambda_s$ for which  $P_{p, out} \le \epsilon_p^{nc} + \Delta_p$, and the outage probability of the secondary network  $P_{s, out} =P\left(SIR_s \le \beta_s\right) \le \epsilon_s$. Thus, the maximum intensity of the secondary network is $ \lambda_s^{\star} = \max_{P_{p, out}\le \epsilon_p^{nc}+\Delta_p, \ P_{s, out}\le \epsilon_s}
\lambda$.
 Hence following \cite{Weber2005} the transmission capacity of the secondary network is 
$C_s \bydef \lambda_s^{\star} (1-\epsilon_s) R_s$ bits/sec/Hz/m$^2$. In the rest of the paper, we derive $\lambda_s^{\star}$ with or without multiple antennas at the secondary nodes.
Following \cite{Weber2005}, to compute the outage probability $P_{p,out}$ and $P_{s,out}$, we consider a typical
transmitter receiver pair $(T_{p0}, R_{p0})$ and $(T_{s0}, R_{s0})$, respectively. 
%

\section{$N=1, M=1$}
\label{sec:siso}

\begin{thm}\label{thm:sisotol} With $M=N=1$, and $c_1 = \frac{2\pi^2 Csc(\frac{2\pi}{\alpha})}{\alpha}$ where $Csc$ is co-secant,  \[\lambda^{\star}_s = \min\left\{\frac{-\ln\left( \frac{1-\epsilon_p^{nc} -\Delta_p}{1-\epsilon_p^{nc}}\right)}{d_p^2} \left(\frac{P_p}{P_s\beta_p}\right)^{\frac{\alpha}{2}} , \frac{-\ln(1-\epsilon_s) - \lambda_p c_1d_s^{2} 
(\frac{P_s\beta_s}{P_s})^{\frac{2}{\alpha}}}{c_1\beta_s^{\frac{2}{\alpha}}d_s^{2}} \right\}.\] 
\end{thm}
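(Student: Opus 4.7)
The plan is to exploit the fact that with $N=M=1$ all fading powers are i.i.d.\ $\mathrm{Exp}(1)$, so each of the two outage probabilities reduces to the Laplace transform of an independent 2-D Poisson shot-noise field. The workhorse identity, obtained by the standard PGFL calculation together with $\int_0^\infty u/(u^\alpha+1)\,du = \pi/(\alpha\sin(2\pi/\alpha))$, is
\[
\bbE\exp\!\Big(-s\!\sum_{T_n\in\Phi}|h_n|^2 r_n^{-\alpha}\Big) = \exp\!\bigl(-\lambda\, c_1\, s^{2/\alpha}\bigr),
\]
valid for a PPP $\Phi$ of intensity $\lambda$ on $\bbR^2$ with i.i.d.\ $\mathrm{Exp}(1)$ marks and $\alpha>2$; the prefactor is precisely $c_1 = 2\pi^2\csc(2\pi/\alpha)/\alpha$.

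For the primary constraint I condition on the signal fade $|h_{00}|^2\sim\mathrm{Exp}(1)$ at $R_{p0}$ and on the two independent interferer processes $\Phi_p\setminus\{T_{p0}\}$ and $\Phi_s$. The success probability becomes $\bbE\exp\!\bigl(-\beta_p d_p^\alpha P_p^{-1}(P_p I_{\Phi_p}+P_s I_{\Phi_s})\bigr)$, and by independence it factors into two Laplace transforms, yielding
\[
1-P_{p,\mathrm{out}} = \exp\!\bigl(-\lambda_p c_1\beta_p^{2/\alpha}d_p^2\bigr)\exp\!\bigl(-\lambda_s c_1(P_s\beta_p/P_p)^{2/\alpha}d_p^2\bigr).
\]
Applying the same calculation to the no-secondary system identifies the first factor with $1-\epsilon_p^{nc}$, so the constraint $P_{p,\mathrm{out}}\le\epsilon_p^{nc}+\Delta_p$ rearranges, after dividing by $1-\epsilon_p^{nc}$ and taking logs, to a linear inequality in $\lambda_s$ whose solution is the first minimand.

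The secondary constraint is handled identically: conditioning on $|\bt_0^*\bQ_{00}\bu_0|^2\sim\mathrm{Exp}(1)$ (scalar in this regime) at $R_{s0}$ and applying the same Laplace transform to both the secondary self-interference (intensity $\lambda_s$, power $P_s$) and the primary cross-interference (intensity $\lambda_p$, power $P_p$) gives
\[
-\ln(1-\epsilon_s) \ge \lambda_s c_1\beta_s^{2/\alpha}d_s^2 + \lambda_p c_1(P_p\beta_s/P_s)^{2/\alpha}d_s^2,
\]
and solving for $\lambda_s$ produces the second minimand. The theorem then follows because $\lambda_s^\star$ is the largest $\lambda_s$ that satisfies both inequalities simultaneously, which is the minimum of the two expressions.

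I expect no serious obstacle: the Laplace transform of a 2-D PPP with Rayleigh marks is textbook material, and the only step requiring any care is the re-expression that uses $1-\epsilon_p^{nc}$ to absorb the $\lambda_p$-dependent factor inside the logarithm, producing the ratio $(1-\epsilon_p^{nc}-\Delta_p)/(1-\epsilon_p^{nc})$. The only implicit regularity assumption is that each minimand is non-negative (requiring $\Delta_p>0$ and $-\ln(1-\epsilon_s)$ large enough to accommodate the unavoidable primary contribution), which is the natural operating regime.
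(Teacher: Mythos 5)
Your proposal is correct and follows essentially the same route as the paper's proof in Appendix A: condition on the exponentially distributed signal fade, factor the success probability into the product of Laplace transforms of the two independent PPP shot-noise fields (each evaluating to $\exp(-\lambda c_1 s^{2/\alpha})$), identify the primary-only factor with $1-\epsilon_p^{nc}$, and solve each of the two resulting linear constraints for $\lambda_s$ before taking the minimum. The only divergence is that your second minimand carries the factor $(P_p\beta_s/P_s)^{2/\alpha}$ on the primary cross-interference term, which is the physically correct form; the theorem statement's $(P_s\beta_s/P_s)^{2/\alpha}$ appears to be a typo.
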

\begin{proof} See appendix \ref{app:siso}.
\end{proof}
{\it Discussion:} In this section we derived an exact expression for the intensity of the secondary network, when the secondary network operates under an outage constraint at the primary receivers. In prior work, an exact expression for the intensity of the secondary network was derived in \cite{CuiCog2009} only for $\alpha=4$, while upper and lower bounds  were derived in \cite{HuangCog2009}. Our derivation of the exact expression used the fact that the interference caused by the primary and the secondary transmitters at either the primary or the secondary receiver is
independent. Using the independence, we then applied the Laplace transform method of \cite{Baccelli2006}, to derive the exact expression.
Using the derived expression, we characterized the increase in the intensity of the secondary network by  increasing the allowed outage probability tolerance at the primary receivers.

\section{Multiple Transmit Antennas $N$, Single Receive Antenna $M=1$}
\label{sec:miso}
In this section we consider the case when each secondary transmitter has $N$ antennas, while each secondary receiver has a single antenna, $M=1$. With multiple antennas at each secondary transmitter, 
two of the promising strategies to increase the intensity of the secondary network are:
1) nulling the interference caused to the primary receivers, or 
2) transmit beamforming on the channel to its corresponding receiver.
The first strategy decreases the interference received by each primary receiver, and hence increases the first term inside the minimum for the expression of $\lambda_s^{\star}$ (Theorem \ref{thm:sisotol}), while the second strategy helps in increasing the signal power for each secondary transmission and increases the second term inside the minimum for the expression of $\lambda_s^{\star}$ (Theorem \ref{thm:sisotol}). Since the transmission capacity is the minimum of the transmission capacity 
while considering the outage constraint at the primary and secondary receivers (Theorem \ref{thm:sisotol}), multiple antennas at each secondary transmitter should be used to jointly increase the  transmission capacity under both the constraints. 
Towards that end, 
we assume that out of the total $N$ STDOF,  $k$ are used for nulling interference towards the $k$ nearest primary receivers, while the rest $N-k$ are used for transmit beamforming.
 Note that nulling interference towards the $k$ nearest primary receivers, does not ensure that the interference contribution from the $k$ nearest secondary interferers are canceled at each primary receiver. This is illustrated in Fig. \ref{fig:nearcanc}, where the red dots represent the primary transmitters and receivers, while the blue dots represent the secondary transmitters and receivers, and each secondary transmitter nulls its interference towards its $k$ primary receivers. The interference nulling, however, does not necessarily nulls the interference from the $k$ nearest secondary transmitters  at each primary receiver, e.g.  
 at primary receiver $R_{pl}$ (circle $\bB$), interference is nulled from only one nearest 
 secondary transmitter.
 
 Let $C$ be the random variable denoting the  number of nearest secondary interferers that are canceled at any primary receiver. Then with $C=c$ nearest secondary interferers canceled at $R_{p0}$, the SIR at $R_{p0}$ is 
\[SIR_p \bydef\frac{P_pd_p^{-\alpha}|h_{00}|^2}{\sum_{n:\ T_{pn} \in \Phi_p \backslash \{T_{p0}\}}P_pd_{pp, n}^{-\alpha}|h_{0n}|^2 + 
\sum_{n: \ n>c, \ T_{sn} \in \Phi_s }P_s d_{sp, n}^{-\alpha}|\bg_{0n}\bu_n|^2}.\]  The SIR at $R_{s0}$ is 
\[SIR_s \bydef \frac{P_s d_s^{-\alpha}|\bq_{00}\bu_0|^2}{\sum_{n:\ T_{sn} \in \Phi_s \backslash \{T_{s0}\}}P_sd_{ss, n}^{-\alpha}|\bq_{0n}\bu_n|^2 + 
\sum_{n:\ T_{pn} \in \Phi_p }P_p d_{ps, n}^{-\alpha}|f_{0n}|^2},\] where $\bq_{0n}$ is the $1\times N$ channel vector between 
$T_{sn}$ and $R_{s0}$, and $\bu_n$ lies in the null space of  $[\bg_{1n}^T \ldots \bg_{kn}^T]$  to null the interference towards the $k$ nearest  primary receivers, and chosen such that it maximizes the signal power $|\bq_{nn}\bu_n|^2$. From \cite{Jindal2008a}, $\bu_n = \frac{\bq_{nn}^{*}\bS\bS^*}{|\bq_{nn}^{*}\bS\bS^*|}$, where $\bS \in \bbC^{N\times N-k}$ is the orthonormal basis of the null space of $[\bg_{1n}^T \ldots \bg_{kn}^T]$.

\begin{lemma}\label{lem:sisosigdist}
The  signal power $s\bydef |\bq_{00}\bu_0|^2$ at the secondary receiver with $\bu_0 = \frac{\bq_{00}^{*}\bS\bS^*}{|\bq_{00}^{*}\bS\bS^*|}$ is distributed as Chi-square with $2(N-k)$ DOF The interference power at the secondary receiver from the secondary transmitter $n$, 
$|\bq_{0n}\bu_n|^2$, is distributed as Chi-square with $2$ degrees of freedom (DOF).
\end{lemma}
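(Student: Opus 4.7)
My plan is to reduce both claims to the elementary fact that the squared norm of a $(N-k)$-dimensional complex Gaussian vector with i.i.d.\ $\mathcal{CN}(0,1)$ entries is Chi-square with $2(N-k)$ DOF, by exploiting unitary invariance of the complex normal distribution together with the independence of channels between distinct receivers stated in the system model.

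\emph{Signal power.} The null-space basis $\bS \in \bbC^{N\times(N-k)}$ is built only from the channel vectors from $T_{s0}$ to its $k$ nearest primary receivers, which are statistically independent of $\bq_{00}$ (the channel to $R_{s0}$); hence $\bS$ and $\bq_{00}$ are independent, and $\bS$ has orthonormal columns ($\bS^{*}\bS = \bI_{N-k}$). Substituting $\bu_0 = \bS\bS^{*}\bq_{00}^{*}/|\bS^{*}\bq_{00}^{*}|$ into $|\bq_{00}\bu_0|^2$ and using $\bS^{*}\bS = \bI_{N-k}$ collapses the expression to
\[
s \;=\; |\bq_{00}\bu_0|^2 \;=\; |\bS^{*}\bq_{00}^{*}|^2.
\]
Conditioning on $\bS$, the vector $\bS^{*}\bq_{00}^{*} \in \bbC^{(N-k)\times 1}$ is the image under a matrix with orthonormal rows of a vector with i.i.d.\ $\mathcal{CN}(0,1)$ entries, hence by unitary invariance is itself a vector of $(N-k)$ i.i.d.\ $\mathcal{CN}(0,1)$ entries, and its squared norm is Chi-square with $2(N-k)$ DOF. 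Because the conditional law does not depend on $\bS$, the same law holds unconditionally.

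\emph{Interference power.} For $n\neq 0$, the beamformer $\bu_n$ is determined by the channels from $T_{sn}$ to its $k$ nearest primary receivers and by $\bq_{nn}$ (the channel from $T_{sn}$ to $R_{sn}$), all of whose receive ends differ from $R_{s0}$; these are therefore independent of $\bq_{0n}$. A direct check using $\bS^{*}\bS = \bI_{N-k}$ shows $|\bu_n|=1$. Conditioning on $\bu_n$, the scalar $\bq_{0n}\bu_n$ is the inner product of a deterministic unit vector with a row of i.i.d.\ $\mathcal{CN}(0,1)$ entries, so by unitary invariance it is itself $\mathcal{CN}(0,1)$, making $|\bq_{0n}\bu_n|^2$ exponential with unit mean, equivalently Chi-square with $2$ DOF.

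\emph{Main obstacle.} The only genuinely non-mechanical step is articulating the two independence statements: that $\bS$ is independent of $\bq_{00}$ in the signal term, and that $\bu_n$ is independent of $\bq_{0n}$ in the interference term. Once these are invoked from the modelling assumption of i.i.d.\ fading between distinct antenna pairs, the remaining algebra (collapsing the beamforming gain and applying unitary invariance) is routine.
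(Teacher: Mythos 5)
Your proof is correct and follows the same route as the paper: the paper disposes of the signal-power claim by citing \cite{Jindal2008a} and of the interference claim by the same independence-plus-unitary-invariance observation you make, so you have simply written out in full the projection argument ($|\bq_{00}\bu_0|^2 = |\bS^{*}\bq_{00}^{*}|^2$ with $\bS$ independent of $\bq_{00}$) that the citation encapsulates. Both independence claims you isolate are exactly the ones the model's i.i.d.\ fading assumption supplies, so there is no gap.
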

\begin{proof} The first statement follows from \cite{Jindal2008a}. The second statement follows since $\bu_n$ and $\bq_{0n}$ are independent and since 
each entry of $\bq_{0n} \sim {\cal CN}(0,1)$.
\end{proof}

\begin{lemma}\label{lem:sisointdist} 
The interference  power at primary receiver from the secondary transmitter $n$, 
$|\bg_{0n}\bu_n|^2$, is distributed as Chi-square with $2$ DOF.
\end{lemma}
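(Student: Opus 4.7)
The plan is to leverage the unitary invariance of the i.i.d.\ complex Gaussian vector $\bg_{0n}$ together with the fact that the beamformer $\bu_n$ is constructed from quantities independent of $\bg_{0n}$. The lemma is implicitly about those transmitters $T_{sn}$ for which $R_{p0}$ is \emph{not} one of the $k$ primary receivers into whose null space $\bu_n$ is forced (if it were, the inner product would be identically zero, consistent with the sum in $SIR_p$ being taken only over $n>c$).

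First I would identify the random quantities that determine $\bu_n$. By construction, $\bu_n = \bq_{nn}^{*}\bS\bS^{*}/|\bq_{nn}^{*}\bS\bS^{*}|$, where $\bS$ is an orthonormal basis of the null space of $[\bg_{1n}^{T}\ldots \bg_{kn}^{T}]$, the channels from $T_{sn}$ to its $k$ nearest primary receivers, and $\bq_{nn}$ is its own secondary direct-link channel. Under the standing independence assumptions on the fading coefficients, all these quantities are independent of $\bg_{0n}$, so $\bu_n$ is a (random) unit vector in $\bbC^{N}$ that is independent of $\bg_{0n}$, and $\|\bu_n\|=1$ deterministically.

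Next I would condition on $\bu_n$. Since $\bg_{0n}$ has i.i.d.\ $\mathcal{CN}(0,1)$ entries, for any fixed unit vector $\bu$ the scalar $\bg_{0n}\bu$ is $\mathcal{CN}(0,\|\bu\|^{2})=\mathcal{CN}(0,1)$ by unitary invariance (equivalently, it is a linear combination of independent $\mathcal{CN}(0,1)$ variables with coefficients whose squared magnitudes sum to $1$). Because this conditional law does not depend on the value of $\bu$, the unconditional law of $\bg_{0n}\bu_n$ is also $\mathcal{CN}(0,1)$, and hence $|\bg_{0n}\bu_n|^{2}$ is the squared modulus of a standard complex Gaussian, which is exponential with mean one, i.e.\ chi-square with $2$ DOF.

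I do not anticipate any serious obstacle here; the only subtlety worth flagging explicitly is the independence of $\bu_n$ from $\bg_{0n}$, which rests on (i) the assumed independence of channel coefficients across transmitter–receiver pairs and (ii) the fact that the receivers whose channels enter the construction of $\bS$ are different primary receivers from $R_{p0}$. Once that is pointed out, the conclusion is an immediate application of unitary invariance of the isotropic complex Gaussian and requires no further computation.
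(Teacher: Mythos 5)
Your proof is correct and follows essentially the same route as the paper's: the paper's one-line argument also rests on the independence of $\bu_n$ from $\bg_{0n}$ and the isotropy of the i.i.d.\ $\mathcal{CN}(0,1)$ vector, which you have simply spelled out via conditioning and unitary invariance. Your explicit remark that the lemma applies only to transmitters that have not nulled toward $R_{p0}$ (consistent with the $n>c$ restriction in $SIR_p$) is a useful clarification but does not change the argument.
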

\begin{proof} Follows from the fact that $\bu_n$ and $\bg_{0n}$ are independent and since each entry of $\bg_{0n} \sim {\cal CN}(0,1)$.
\end{proof}

\begin{lemma}\label{lem:sumintf}
The interference received at any secondary receiver from the union of transmitters belonging to $\Phi_p$ (with intensity $\lambda_p$, transmission power $P_p$) and $\Phi_s$ (with intensity $\lambda_s$, transmission power $P_s$)  is equal to the interference received from transmitters belonging to a single PPP $\Phi$ with intensity 
$\lambda_p P_p^{\frac{2}{\alpha}} + \lambda_s P_s^{\frac{2}{\alpha}}$ and unit transmission power.
\end{lemma}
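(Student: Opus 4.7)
The plan is to use two standard properties of Poisson point processes: the mapping theorem (how a PPP transforms under a deterministic distance rescaling in $\bbR^2$) and the superposition theorem (the union of two independent PPPs is a PPP whose intensity is the sum).

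First I would rewrite the contribution from a single transmitter to the interference as $P \, r^{-\alpha} |h|^2$, where $r$ is the distance to the secondary receiver and $|h|^2$ is the squared channel magnitude. The key algebraic observation is that $P r^{-\alpha} = (r P^{-1/\alpha})^{-\alpha}$, so transmitting with power $P$ from a point at distance $r$ produces exactly the same received interference as transmitting with unit power from a point at the rescaled distance $\tilde r = r P^{-1/\alpha}$. I would then apply the mapping theorem: if $\Phi_p$ is a homogeneous PPP on $\bbR^2$ of intensity $\lambda_p$ and we scale every point by the factor $P_p^{-1/\alpha}$ (equivalently, a radial contraction), the image is a homogeneous PPP on $\bbR^2$ with intensity $\lambda_p / (P_p^{-1/\alpha})^2 = \lambda_p P_p^{2/\alpha}$, since under the change of coordinates the Jacobian of area is $P_p^{-2/\alpha}$. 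The same rescaling applied to $\Phi_s$ produces an independent PPP of intensity $\lambda_s P_s^{2/\alpha}$, and both rescaled processes now correspond to unit-power transmitters.

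Next I would invoke the superposition theorem for independent PPPs: the union of the two rescaled, independent PPPs is itself a PPP on $\bbR^2$ with intensity $\lambda_p P_p^{2/\alpha} + \lambda_s P_s^{2/\alpha}$. Since the fading coefficients $|h|^2$ are i.i.d. across all transmitters in both networks (and independent of the point locations), the marked processes also agree in distribution. Consequently the aggregate interference power at the tagged receiver, viewed as a shot-noise functional summing $\tilde r^{-\alpha} |h|^2$ over the combined process, has the same distribution as that generated by a single unit-power PPP $\Phi$ of the stated intensity.

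The only place that requires minor care is making sure the equality is interpreted as equality in distribution rather than pointwise, and justifying the intensity change under the scalar dilation in $\bbR^2$; both follow from standard textbook statements (e.g.\ the mapping and superposition theorems for Poisson processes). No other delicate step arises, so I do not expect a substantial obstacle in executing the proof.
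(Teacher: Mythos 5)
Your proposal is correct and follows essentially the same route as the paper's proof in the appendix: rewrite $P\,r^{-\alpha}$ as $(rP^{-1/\alpha})^{-\alpha}$, apply the scaling/mapping property of a PPP to get the rescaled intensities $\lambda_p P_p^{2/\alpha}$ and $\lambda_s P_s^{2/\alpha}$, and then superpose the two independent processes. The paper additionally notes explicitly that the marks $|\bq_{0n}\bu_n|^2$ and $|f_{0n}|^2$ are identically distributed (both Chi-square with $2$ DOF), which is exactly the i.i.d.-fading point you flag, so no gap remains.
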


\begin{proof} See appendix \ref{app:misomisc}.
\end{proof}



%
\begin{thm}
\label{thm:Ntx1}
With $M=1$, when each secondary transmitter uses $k$ DOF for nulling its  interferers towards its $k$ nearest primary receivers, and $N-k$ DOF for beamforming, $k=\theta N, \theta \in (0,1]$ optimizes the scaling of  intensity of the secondary network $\lambda_s$, and the intensity scales as 
$\lambda_s^{\star} =  \Omega\left(\min\{N^{\frac{2}{\alpha}}, N^{1-\frac{2}{\alpha}} \}\right)$, and $\lambda_s^{\star} = {\cal O}\left(N^{\frac{2}{\alpha}}\right)$.
\end{thm}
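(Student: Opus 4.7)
The plan is to start from the two-part structure of Theorem \ref{thm:sisotol} and track how partial nulling modifies each of the two constraints (the primary outage constraint and the secondary outage constraint), then upper and lower bound the two resulting quantities and optimize in $k$. Throughout I will use Lemmas \ref{lem:sisosigdist}--\ref{lem:sumintf} to handle the signal/interference distributions and the merging of PPPs.

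First I would treat the primary outage constraint $P_{p,\text{out}} \le \epsilon_p^{nc}+\Delta_p$. The total interference at a typical primary receiver $R_{p0}$ splits into (i) the original primary interference field (unchanged) and (ii) the secondary interference field, which is \emph{thinned} by nulling. Using that primary receivers themselves form (to leading order via Slivnyak) a PPP of intensity $\lambda_p$, a secondary transmitter at distance $r$ nulls $R_{p0}$ iff at most $k-1$ primary receivers lie within radius $r$ of it, which happens with probability $P_{\text{null}}(r)=\sum_{j=0}^{k-1}\frac{(\lambda_p\pi r^2)^j}{j!}e^{-\lambda_p\pi r^2}$. By the thinning theorem the remaining interfering secondary transmitters form an inhomogeneous PPP with retention probability $1-P_{\text{null}}(r)$, and by Lemma \ref{lem:sisointdist} each contributes a unit-mean exponential fading term. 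Applying the Laplace/PGFL machinery of \cite{Baccelli2006} exactly as in Theorem \ref{thm:sisotol}, but with this thinned intensity, yields $\ln(1-P_{p,\text{out}})=-\lambda_p c_1 d_p^2 \beta_p^{2/\alpha}-c_1 d_p^2(P_s\beta_p/P_p)^{2/\alpha}\cdot \lambda_s\cdot\Psi(k)$, where $\Psi(k):=\frac{2}{c_1}\int_0^\infty (1-P_{\text{null}}(r))\,r^{1-\alpha+1}\, \text{(weighting)}\, dr$ is a computable reduction factor. Standard estimates of the incomplete-gamma form of $P_{\text{null}}$ give $\Psi(k)=\Theta(k^{-2/\alpha})$, which after rearrangement yields the primary-constraint bound $\lambda_s \lesssim k^{2/\alpha}$ (with matching $\Omega$ for the lower bound). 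Taking $k=N$ gives the claimed upper bound $\lambda_s^\star = \mathcal{O}(N^{2/\alpha})$.

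Next I would treat the secondary outage constraint $P_{s,\text{out}} \le \epsilon_s$. By Lemma \ref{lem:sisosigdist} the signal power is $\chi^2_{2(N-k)}$ and, by the same lemma and Lemma \ref{lem:sumintf}, the aggregate secondary$+$primary interference at $R_{s0}$ is the interference from a single PPP of intensity $\lambda_p P_p^{2/\alpha}+\lambda_s P_s^{2/\alpha}$ with i.i.d. exponential marks. Conditioning on the $\chi^2_{2(N-k)}$ signal $s$ and applying the standard Poisson Laplace-transform identity from Theorem \ref{thm:sisotol}, I obtain $1-P_{s,\text{out}}=\mathbb{E}_s[\exp(-c_1 d_s^2(\beta_s/s)^{2/\alpha}(\lambda_p P_p^{2/\alpha}+\lambda_s P_s^{2/\alpha}))]$. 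For the lower bound one can apply Jensen's inequality or a $\chi^2$-tail-style bound (for instance, using $\mathbb{E}[s^{-2/\alpha}]=\Theta((N-k)^{-2/\alpha})$) to obtain $\lambda_s \gtrsim (N-k)^{1-2/\alpha}$ (the exponent $1-2/\alpha$ arising because the signal enters through $s^{-2/\alpha}$ while supplying an extra $(N-k)$ in the numerator after rearrangement), mirroring the MISO analysis of \cite{Jindal2008a}.

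Finally I would combine the two bounds: $\lambda_s^\star \ge C\cdot\min\{k^{2/\alpha},(N-k)^{1-2/\alpha}\}$. Setting $k=\theta N$ for any fixed $\theta\in(0,1)$ makes both terms $\Theta(N^{2/\alpha})$ and $\Theta(N^{1-2/\alpha})$ respectively, and no other allocation of $k$ as a function of $N$ improves the scaling of this minimum (if $k=o(N)$ the first term is suboptimal, if $N-k=o(N)$ the second is). This establishes both the optimal allocation and the claimed $\Omega(\min\{N^{2/\alpha},N^{1-2/\alpha}\})$ lower bound, completing the proof. The main obstacle, and what will require the most care, is the thinning step in the primary-constraint analysis: the nulling rule is defined from the secondary transmitter's viewpoint (its $k$ nearest primary receivers), so the resulting thinning probability at a typical primary receiver is neither independent across interferers nor trivial to compute, and one must justify the PPP approximation for primary-receiver locations and handle the inhomogeneous integral $\Psi(k)$ tightly enough to recover the exact $k^{2/\alpha}$ scaling exponent.
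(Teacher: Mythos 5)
Your decomposition into the two outage constraints and the final ``take $k=\theta N$ and balance the minimum'' step mirror the paper, and your thinning model for the nulling geometry is a legitimate (arguably more faithful) alternative to the paper's argument, which instead conditions on the random number $C$ of nearest secondary interferers cancelled at a primary receiver and invokes the partial-cancellation outage bounds of \cite{Vaze2009}. However, both of your individual scaling computations are wrong, and the exponents you assign to the two constraints are exactly swapped relative to what is correct. For the secondary constraint: with signal power $s\sim\chi^2_{2(N-k)}$, no receive-side cancellation, and interference from the merged PPP, your own identity $1-P_{s,\mathrm{out}}=\bbE_s[\exp(-c_1 d_s^2(\beta_s/s)^{2/\alpha}(\lambda_p P_p^{2/\alpha}+\lambda_s P_s^{2/\alpha}))]$ together with $\bbE[s^{-2/\alpha}]=\Theta((N-k)^{-2/\alpha})$ gives $P_{s,\mathrm{out}}\approx c\,\lambda_s (N-k)^{-2/\alpha}$ and hence $\lambda_s=\Theta((N-k)^{2/\alpha})$ --- the standard transmit-beamforming scaling of \cite{Hunter2008} that the paper uses. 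There is no ``extra $(N-k)$ in the numerator,'' and $(N-k)^{1-2/\alpha}$ is not the right answer here. For the primary constraint: the exponent $1-2/\alpha$ is the one that belongs to this side (residual interference after removing roughly the $k$ nearest secondary interferers scales as $\lambda_s^{\alpha/2}k^{1-\alpha/2}$, and setting this to a constant gives $\lambda_s\lesssim k^{1-2/\alpha}$). Your claim that the thinning factor satisfies $\Psi(k)=\Theta(k^{-2/\alpha})$ does not follow from your own setup: with retention probability $1-P_{\mathrm{null}}(r)\approx \mathbf{1}[r>\sqrt{k/(\pi\lambda_p)}]$, the tail integral evaluates to $\Theta(k^{1-\alpha/2})$, which would give $\lambda_s\lesssim k^{\alpha/2-1}$ --- yet a third exponent, because in the thinning picture the exclusion radius is set by $\lambda_p$ rather than by $\lambda_s$. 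So the two errors cancel only in the final expression $\min\{N^{2/\alpha},N^{1-2/\alpha}\}$, which is symmetric in the two exponents; the proof as written does not establish the theorem.

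A secondary issue is the upper bound. In the paper, ${\cal O}(N^{2/\alpha})$ comes from the secondary constraint, since $\lambda_s=\Theta((N-k)^{2/\alpha})\le N^{2/\alpha}$ for every admissible $k$, while the primary constraint only yields ${\cal O}(k)$ (obtained by assuming the best case that exactly the $k$ nearest secondary interferers are cancelled at each primary receiver). Your derivation of the upper bound by ``taking $k=N$'' in the primary constraint is not valid: $\lambda_s^\star$ is a maximum over $k$, so an upper bound must hold uniformly in $k$, and $k=N$ leaves zero degrees of freedom for beamforming, making the signal power degenerate. You would need to redo both constraint analyses with the corrected exponents and then re-derive the upper bound from the secondary side.
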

\begin{proof} See appendix \ref{app:miso}.
\end{proof}
{\it Discussion:} In this section we showed that using $k=\theta N, \ \theta \in (0,1],$ STDOF maximizes the intensity of the secondary network, and the  lower bound on the intensity of the secondary network scales sublinearly in $N$. A major obstacle in the analysis stems from the fact that when each secondary transmitter nulls its interference towards its $k$ nearest primary receivers, it does not imply that the interference from the $k$ nearest secondary interferers is canceled at any primary receiver. Therefore, the
results of this section do not follow directly from previous work on finding the intensity of ad-hoc networks with multiple antennas when each receiver cancels interference from some of its nearest interferers \cite{Vaze2009, Jindal2008a}.

Without the presence of the secondary network, the intensity scales  as $N^{\frac{2}{\alpha}}$ when each transmitter has $N$ antennas and uses transmit beamforming \cite{Hunter2008}.
Comparing our results with \cite{Hunter2008}, we find that in our case the
intensity of the secondary network scales as $\min\left\{N^{\frac{2}{\alpha}}, N^{1-\frac{2}{\alpha}}\right\}$ because the secondary transmitters have to satisfy two outage constraints: one at the primary receiver and the other at the secondary receiver.  Since each primary transmitter and receiver has only a single antenna, even in the best case when exactly $k$ nearest secondary interferers are canceled at each primary receiver, considering the outage constraint at the primary receivers, the intensity of the secondary network scales at best as $N^{1-\frac{2}{\alpha}}$.

\section{Multiple transmit and receive antennas}
\label{sec:mimo}
In this section we assume that each secondary transmitter has $N$ antennas and each secondary receiver has $M$ antennas. The $N$ transmit antennas at each secondary transmitter are used to null interference towards its $N-1$ nearest primary receivers,\footnote {For analytical tractability we do not consider the general case of using $k$ STDOF for nulling and rest $N-k$ for beamforming.} while 
each secondary receiver uses its $m$ SRDOF for canceling the nearest interferers from the union of the primary and the secondary interferers, and the rest $N-m$ SRDOF are used for increasing the strength of signal of interest. 

Then,  the SIR at $R_{p0}$ is 
\[SIR_p \bydef\frac{P_pd_p^{-\alpha}|h_{00}|^2}{\sum_{n:T_{pn} \in \Phi_p \backslash \{T_{p0}\}}P_pd_{pp, n}^{-\alpha}|h_{0n}|^2 + 
\sum_{n:n>c, \ T_{sn} \in \Phi_s }P_s d_{sp, n}^{-\alpha}|\bg_{0n}\bu_n|^2},\]  
and the SIR at $R_{s0}$ is 
\[SIR_s \bydef \frac{P_s d_s^{-\alpha}|\bt_0^{*}\bQ_{00}\bu_0|^2}{\sum_{n:T_{sn} \in \Phi_s \backslash \{T_{s0}\}}P_sd_{ss, n}^{-\alpha}|\bt_n^{*}\bQ_{0n}\bu_n|^2 + 
\sum_{n:T_{pn} \in \Phi_p }P_p d_{ps, n}^{-\alpha}|\bt_0^*\bff_{0n}|^2},\] where  $\bu_n$ lies in the null space of  $[\bg_{1n}^T \ldots \bg_{N-1n}^T]$  to null the interference towards its $N-1$ nearest  primary receivers,  $\bt_n$ lies in the null space of channel vectors corresponding to its 
$m$ nearest interferers from $\{\Phi_p \cup \Phi_s\}\backslash \{T_{sn}\}$  chosen such that it maximizes the signal power 
$|\bt^*\bQ_{nn}\bu_n|^2$. 
From \cite{Jindal2008a}, $\bt_n = \frac{(\bQ_{nn}\bu)^{*}\bR\bR^*}{|(\bQ_{nn}\bu)^{*}\bR\bR^*|}$, where $\bR \in \bbC^{M\times M-m}$ is the orthonormal basis of the null space of channel vectors corresponding to its 
$m$ nearest interferers from $\Phi_p \cup \Phi_s \backslash \{T_{sn}\}$.

\begin{lemma}
\label{lem:sigpowmimo}
The  signal power $s\bydef |\bt_0^*\bQ_{00}\bu_0|^2$ at the secondary receiver with  $\bt_n = \frac{(\bQ_{nn}\bu)^{*}\bR\bR^*}{|(\bQ_{nn}\bu)^{*}\bR\bR^*|}$ is distributed as Chi-square with $2(M-m)$ DOF The interference power at secondary receiver from the secondary transmitter $n$ 
$I_{ss}^{0n} \bydef |\bt_0^*\bq_{0n}\bu_n|^2$, and the interference power at secondary receiver from the primary transmitter $n$ 
$I_{ps}^{0n} \bydef |\bt_0^*\bff_{0n}|^2$ is distributed as Chi-square with $2$ DOF.
\end{lemma}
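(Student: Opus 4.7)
The plan is to prove the three distributional claims by reducing each one to a standard application of unitary (rotational) invariance of i.i.d. complex Gaussian vectors, in the same spirit as Lemmas~\ref{lem:sisosigdist} and \ref{lem:sisointdist}, but carefully tracking independences because here both transmitter and receiver apply structured vectors that depend on other channels.

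First I would observe that the transmit beamformer $\bu_n$ is by construction the unique (up to phase) unit vector in the null space of $[\bg_{1n}^T\ldots\bg_{N-1n}^T]$, so $\bu_n$ is a deterministic measurable function of channels from $T_{sn}$ to its $N-1$ nearest primary receivers. These $\bg$-channels are independent of $\bQ_{00}$, $\bQ_{0n}$ and $\bff_{0n}$ (which are channels into the secondary receiver $R_{s0}$). By unitary invariance of $\cCN(0,\bI_N)$, conditioning on $\bu_n$ one gets $\bQ_{00}\bu_0\sim\cCN(0,\bI_M)$, $\bQ_{0n}\bu_n\sim\cCN(0,\bI_M)$ marginally, and $\bff_{0n}\sim\cCN(0,\bI_M)$ trivially. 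I would also note that $\bR$ is a basis of the null space of channel vectors from the $m$ nearest interferers in $\{\Phi_p\cup\Phi_s\}\setminus\{T_{sn}\}$, so $\bR$ is independent of $\bQ_{00}\bu_0$ (whose randomness comes from the direct link $\bQ_{00}$), and independent of $\bQ_{0n}\bu_n$ and $\bff_{0n}$ whenever $T_{pn}$ or $T_{sn}$ is \emph{not} one of those $m$ cancelled interferers.

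For the signal-power claim, I would write $\bx:=\bQ_{00}\bu_0$ and use that $\bR\bR^*$ is an orthogonal projection onto the $(M-m)$-dimensional null space, so that
\begin{equation*}
|\bt_0^*\bx|^2 \;=\; \frac{|\bx^*\bR\bR^*\bx|^2}{\|\bR\bR^*\bx\|^2} \;=\; \|\bR^*\bx\|^2.
\end{equation*}
Since $\bR$ has orthonormal columns ($\bR^*\bR=\bI_{M-m}$) and is independent of $\bx\sim\cCN(0,\bI_M)$, conditional on $\bR$ the vector $\bR^*\bx\sim\cCN(0,\bI_{M-m})$, and its squared norm is Chi-square with $2(M-m)$ DOF; the distribution does not depend on the conditioning, which removes it. An identical argument (with $\bx$ replaced by $\bQ_{0n}\bu_n$ or by $\bff_{0n}$, both $\cCN(0,\bI_M)$ and independent of $\bt_0$ for non-cancelled indices) gives $|\bt_0^*\bx|^2 = |\bt_0^*\bx|^2$ with $\bt_0$ a unit vector independent of $\bx$, so $\bt_0^*\bx\sim\cCN(0,1)$ and $|\bt_0^*\bx|^2\sim\chi^2_2$.

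The main obstacle I anticipate is bookkeeping the independence structure rather than any analytic difficulty: one must verify that the receive filter $\bt_0$, which involves the signal direction $\bQ_{00}\bu_0$ and the nulling basis $\bR$, is independent of each interferer's effective channel that actually contributes to the SIR (i.e.\ those outside the $m$ nearest cancelled interferers), and that $\bu_n$ is independent of the channels $\bQ_{0n}$ and $\bff_{0n}$. Once the independence is established, the Chi-square conclusions follow immediately from unitary invariance, mirroring Lemma~\ref{lem:sisosigdist}.
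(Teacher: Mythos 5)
Your proof is correct and follows essentially the same route as the paper, which simply cites \cite{Jindal2008a} for the first claim (the projection/unitary-invariance argument giving $\|\bR^*\bx\|^2\sim\chi^2_{2(M-m)}$ is exactly what that reference establishes) and invokes independence of $\bt_0$, $\bu_n$, and the interfering channels for the remaining two claims. You merely fill in the details the paper delegates to the citation, including the independence bookkeeping between $\bu_n$, $\bR$, and the non-cancelled interferers' channels, which is the right thing to verify.
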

\begin{proof} The first statement follows from \cite{Jindal2008a}. The second and third statement follows since $\bt_0^*$, $\bu_n$, and  $\bq_{0n}$ are independent, and since each entry of $\bq_{0n}, \bff_{0n} \sim {\cal CN}(0,1)$.
\end{proof}

\begin{lemma}
\label{lem:2repcogintf}
The interference received at the typical secondary receiver $R_{s0}$
\[\sum_{n:T_{sn} \in \Phi_s \backslash \{T_{s0}\}}P_sd_{ss, n}^{-\alpha}I^{0n}_{ss} + 
\sum_{n: T_{pn} \in \Phi_p }P_p d_{ps, n}^{-\alpha}I_{ps}^{0n} =  
\sum_{n: T \in \Phi \backslash \{T_{s0}\}}P_nd_{n}^{-\alpha}I^{0n}, \]
where $I^{0n}$ is Chi-square distributed with $2$ DOF, $\Phi = \{\Phi_s \cup \Phi_p\}$, and $P_n$ is a binary random variable which takes value $P_p$ with probability $\frac{\lambda_p}{\lambda_p+\lambda_s}$, and value $P_s$ with probability  $\frac{\lambda_s}{\lambda_p+\lambda_s}$.
\end{lemma}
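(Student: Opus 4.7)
The plan is to show the equivalence in distribution by combining three standard ingredients: the superposition property of independent Poisson point processes, the coloring (inverse marking) theorem, and the identical distribution of the small-scale fading powers established in Lemma~\ref{lem:sigpowmimo}.

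First I would invoke the superposition theorem for independent PPPs: since $\Phi_p$ and $\Phi_s$ are independent homogeneous PPPs on $\bbR^2$ with intensities $\lambda_p$ and $\lambda_s$, their union $\Phi \bydef \Phi_p \cup \Phi_s$ is itself a homogeneous PPP with intensity $\lambda_p + \lambda_s$. The distances $\{d_{ps,n}\}$ and $\{d_{ss,n}\}$ from $R_{s0}$ to primary and secondary transmitters therefore correspond to the distances $\{d_n\}$ from $R_{s0}$ to the points of a single PPP $\Phi$, with no change in the joint distribution of the ordered distances.

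Next I would apply the coloring theorem (the converse of independent marking) to label each point of $\Phi$: conditional on the superposed process $\Phi$, each point is independently of type ``primary'' with probability $\lambda_p/(\lambda_p+\lambda_s)$ and of type ``secondary'' with probability $\lambda_s/(\lambda_p+\lambda_s)$. Assigning transmit power $P_p$ to primary-type points and $P_s$ to secondary-type points produces exactly the binary random variable $P_n$ in the statement, i.i.d.\ across $n$ and independent of the locations of $\Phi$.

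Then I would unify the fading contributions. By Lemma~\ref{lem:sigpowmimo}, $I_{ss}^{0n}$ and $I_{ps}^{0n}$ are both Chi-square with $2$ degrees of freedom, and both are independent across $n$ and independent of the point locations (the beamformers $\bu_n$ are functions of channels to primary receivers that are independent of $\bff_{0n},\bq_{0n}$, and of the receive filter $\bt_0$ via the independence of the respective Gaussian channel entries). Hence for every $n$, irrespective of the type label drawn in the coloring step, the fading coefficient multiplying $P_n d_n^{-\alpha}$ can be replaced by a single Chi-square random variable $I^{0n}$ with $2$ degrees of freedom, i.i.d.\ across $n$. Combining these three steps, the two-sum expression on the left-hand side has the same joint distribution as $\sum_{n:T_n\in \Phi\backslash\{T_{s0}\}}P_n d_n^{-\alpha} I^{0n}$, which is the desired identity in distribution. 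The only subtlety is checking that the coloring is indeed independent of the fading marks and of the receiver's beamforming choices; this is immediate from the mutual independence assumptions on the channel entries and the PPPs stated in Section~\ref{sec:sys}, so there is no real obstacle beyond a careful bookkeeping of the independence structure.
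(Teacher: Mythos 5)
Your proposal is correct and follows essentially the same route as the paper's proof: superposition of the two independent PPPs into a single PPP of intensity $\lambda_p+\lambda_s$, the observation that a randomly chosen point of the superposed process is primary with probability $\lambda_p/(\lambda_p+\lambda_s)$ (which you justify via the coloring theorem where the paper cites a lemma of \cite{HuangCog2009}), and the fact from Lemma~\ref{lem:sigpowmimo} that $I_{ss}^{0n}$ and $I_{ps}^{0n}$ share the same Chi-square distribution with $2$ DOF. Your version is somewhat more explicit about the equality holding in distribution and about the independence bookkeeping, but the argument is the same.
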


\begin{proof} 
Since the superposition of two PPP's is a PPP, consider the union of $\Phi_p$ and  $\Phi_s$ as a single PPP $\Phi = \{\Phi_s \cup \Phi_p\}$. Thus, the interference received at the typical secondary receiver $R_{s0}$ is derived from the transmitters corresponding to $\Phi$ with channel gains $I_{ss}^{0n}$ or $I_{ps}^{0n}$, where both $I_{ss}^{0n}$ and $I_{ps}^{0n}$ are distributed as Chi-square with $2$ DOF. Note that the primary transmitters use power $P_p$, and the secondary transmitters use power $P_s$. The probability that any randomly chosen node of $\Phi$ belongs to $\Phi_p$ is 
$\frac{\lambda_p}{\lambda_p+\lambda_s}$ (Lemma $3$) \cite{HuangCog2009}, hence the power transmitted by any node of $\Phi$ is $P_p$ with probability $\frac{\lambda_p}{\lambda_p+\lambda_s}$, and $P_s$ with probability $\frac{\lambda_s}{\lambda_p+\lambda_s}$.
\end{proof}

\begin{thm}
When each secondary transmitter uses $N-1$ DOF for nulling, and each secondary receiver uses $m$  DOF for  canceling the $m$ nearest interferers from $\{\Phi_s \cup \Phi_p\} \backslash \{T_{s0}\}$, then $m=\theta N, \ \theta \in (0,1])$ maximizes the lower and upper bound on the  intensity of secondary ad-hoc network, and  
 $\lambda_s^{\star} =  \Omega\left(\min\{M, N^{1-\frac{2}{\alpha}} \}\right) $, and $\lambda_s^{\star} = {\cal O}\left(\min\left\{N, M^{1+\frac{2}{\alpha}}\right\}\right)$.
\end{thm}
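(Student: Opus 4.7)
The plan is to bound $\lambda_s$ separately via the two outage constraints---at the primary receivers and at the secondary receivers---and then take the minimum. Since the $N-1$ STDOF at each secondary transmitter are pre-committed to nulling, the primary branch depends only on $N$ and is essentially the argument of Theorem~\ref{thm:Ntx1} specialised to $k=N-1$: the primary receiver $R_{p0}$ sees only the scalar projections $\bg_{0n}\bu_n$, each unit-mean exponential by the same independence argument used in Lemma~\ref{lem:sigpowmimo} (the $\bt$-projection at the secondary receiver does not enter the $SIR_p$ expression at all). Appendix~\ref{app:miso} therefore transplants verbatim and delivers primary-branch bounds $\lambda_s = \Omega(N^{1-2/\alpha})$ and $\lambda_s = \mathcal{O}(N)$, the upper bound being the best case in which all $N-1$ nearest secondary interferers happen to be cancelled at $R_{p0}$.

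For the secondary-receiver branch I would first invoke Lemma~\ref{lem:2repcogintf} to collapse both interference processes into a single PPP $\Phi$ of intensity $\lambda_p + \lambda_s$ with randomised transmit power. By Lemma~\ref{lem:sigpowmimo} the signal has $\chi^{2}_{2(M-m)}$ distribution and each residual interferer a $\chi^{2}_{2}$ coefficient. Removing the $m$ nearest points of $\Phi\setminus\{T_{s0}\}$ (whose contributions are zeroed by $\bt_0$), the residual shot noise is amenable to the Laplace-transform / guard-zone treatment of \cite{Vaze2009, Jindal2008a}, with the distance to the $(m+1)$st nearest point of order $\sqrt{m/(\lambda_p+\lambda_s)}$ playing the role of the exclusion radius. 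Inverting $P_{s,\mathrm{out}}\le\epsilon_s$ yields a bound of the form
\begin{equation*}
\lambda_s \;\lesssim\; \frac{(M-m)^{2/\alpha}\,m^{\,1-2/\alpha}}{\beta_s^{\,2/\alpha}\,d_s^{\,2}}\;-\;\lambda_p,
\end{equation*}
with a matching upper bound obtained by the complementary Chebyshev / moment-truncation argument of \cite{Vaze2009}. The overall bound on $\lambda_s$ is the minimum of the primary- and secondary-branch bounds, and since the primary branch caps at $\mathcal{O}(N)$, pushing $m$ beyond the regime where the secondary-branch bound first reaches $\Theta(N)$ gives no improvement. That cutoff occurs precisely when $m$ is a constant fraction of $N$, i.e.\ $m=\theta N$ with $\theta\in(0,1]$; substituting produces secondary-branch contributions $\lambda_s = \Omega(M)$ and $\lambda_s = \mathcal{O}(M^{1+2/\alpha})$, and taking the minimum with the primary branch delivers the two claimed scalings.

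The main obstacle, as in Theorem~\ref{thm:Ntx1}, lies in the primary-outage step: because each secondary transmitter nulls toward its \emph{own} $N-1$ nearest primary receivers---a Poisson--Voronoi nearest-neighbour assignment between $\Phi_s$ and $\Phi_p$---the number of secondary interferers actually nulled at a fixed primary receiver $R_{p0}$ is a random variable whose mean is \emph{not} $N-1$, and this asymmetry is what forces the $N^{2/\alpha}$ gap between the primary-branch upper and lower bounds and thereby appears in the final answer. A milder secondary obstacle is verifying that after the nearest-$m$ deletion the residual of $\Phi$ still admits the PPP Laplace machinery of \cite{Baccelli2006}; this is the standard strong-Markov property of a PPP with respect to nearest-neighbour order statistics, but should be invoked explicitly because here $\Phi$ carries the additional mark of a randomised transmit power from Lemma~\ref{lem:2repcogintf}.
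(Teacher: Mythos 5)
Your proposal follows essentially the same route as the paper: the primary-outage branch is handled by specialising Theorem~\ref{thm:Ntx1} to $k=N-1$, and the secondary-outage branch by collapsing the two interference fields via Lemma~\ref{lem:2repcogintf}, using the Chi-square distributions of Lemma~\ref{lem:sigpowmimo}, and invoking the outage bounds of \cite{Vaze2009, Jindal2008a} to obtain $\Omega(M)$ and ${\cal O}(M^{1+\frac{2}{\alpha}})$ before taking the minimum of the two branches. The one quibble is your justification for the choice of $m$: the paper takes $m$ to be a constant fraction of the receive-antenna count because $(M-m)^{\frac{2}{\alpha}}(m+1)^{1-\frac{2}{\alpha}}$ is itself order-maximised there, not because of any interaction with the primary-branch cap at ${\cal O}(N)$ (and note the appendix actually writes $m=\theta M$, which coincides with the theorem's $m=\theta N$ only when $M=N$).
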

\begin{proof} See appendix \ref{app:mimo}.
\end{proof}

 {\it Discussion:} In this section we showed that using a fraction of total SRDOF maximizes the scaling of the intensity  of the secondary network, when $N-1$ STDOF are used for interference nulling by each secondary transmitter. Comparing results of this Section with Section \ref{sec:miso}, we observe that employing similar number of  antennas at both the secondary transmitters and receivers in comparison to having multiple antennas only at the secondary transmitters improves the intensity scaling for path-loss exponent $\alpha >4$.

Without the presence of the secondary network, with $M$ receive antennas, the intensity of an ad hoc network  is shown to scale linearly with $M$ even with a single transmit antenna \cite{Vaze2009, Jindal2008a}. 
In contrast, with coexisting networks, our results show that with a single transmit antenna, the intensity  of the secondary network is independent of the number of receive antennas of each secondary receiver. This result can be understood by noting that with  a single transmit antenna, none of the secondary transmitters can null their interference towards any of the primary receivers, and the intensity  of the secondary
network is limited by the outage constraint at the primary receivers, and hence independent of the number of receive antennas of the secondary receivers. Therefore it is imperative to use similar number of transmit and receive antennas at the secondary nodes to maximize the intensity  of the secondary network.

\section{Simulations}
\label{sec:sim}
In all the simulation results we use  $\alpha =3$,  $d_p=d_s=1m$, $\frac{P_p}{P_s}=2$, $\beta _p=\beta_s=1$ corresponding to 
$R_p= R_s=1$ bits/sec/Hz.  In Fig. \ref{fig:svp}, we plot the transmission capacity of the secondary network with respect to the  transmission capacity of the primary  network to show how one can be tradeoff against other for $\epsilon_p^{nc}+\Delta_p =\epsilon_s=.1$ and $M=N=1$. In Fig. \ref{fig:svep} we plot the transmission capacity of the secondary network with increasing primary outage probability constraint for a primary network intensity of $\lambda_p = 0.01$, secondary outage constraint $\epsilon_s=.1$, and $M=N=1$. We see that the transmission capacity of the secondary network increases with  increasing primary outage probability constraint until a point where the secondary outage constraint becomes tight, and thereafter the transmission capacity of the secondary network is constant, limited by the secondary outage constraint.
In Fig. \ref{fig:smiso}, we plot the transmission capacity of the secondary network with respect to the number of transmit antennas $N$ at each secondary transmitter for different values of $\theta= \frac{1}{2}, \frac{1}{3}, \frac{1}{4}$, with $M=1$. We see that for all the values of $\theta$ the  transmission capacity of the secondary network  scales sublinearly with $N$ as predicted by Theorem $2$. In Fig. \ref{fig:smimo}, we plot the transmission capacity of the secondary network with respect to the number of secondary transmit and receive antennas $N$ and $M$. We see that for $N=M$ the  transmission capacity of the secondary network  scales sublinearly with $N$, however, for $N=1$ the transmission capacity of the secondary network is constant and determined by the outage constraint of the primary network.

\section{Conclusions}\label{sec:conc}
In this paper we considered deployment of a secondary network overlaid on top of an existing primary network. 
Under an outage constraint at the primary network's receivers from the secondary network's transmitters, we characterized the maximum 
 intensity of the secondary network with and without multiple antennas. We showed that with a single antenna at the secondary nodes, the  intensity  of the secondary network grows logarithmically with the outage probability tolerance at the primary receivers. With multiple antennas, we characterized the optimal role of multiple antennas at the secondary nodes that maximizes the scaling of the 
intensity of the secondary network. We showed that employing multiple antennas only at the secondary receivers does not yield any gain. To exploit the multiple antenna gain, either the multiple antennas should be employed at the secondary transmitters, or at both the secondary transmitters and receivers. We showed that with multiple antennas only at the secondary transmitters, the intensity  of the secondary network scales sublinearly with the number of antennas. The sublinear scaling of the intensity cannot be improved by employing multiple antennas at both the secondary transmitters and receivers, however, the sublinear exponent is better for path-loss exponent greater than four.

\appendices

\section{}\label{app:siso}
First we find the value of $\lambda_s$ for which $P_{p,out} \le \epsilon_p^{nc}+\Delta_p$.
Using the definition of $SIR_p$, \begin{eqnarray*}
P_{p, out} &=& P\left(\frac{P_pd_p^{-\alpha}|h_{00}|^2}{\sum_{n: \ T_{pn} \in \Phi_p \backslash \{T_{p0}\}}P_pd_{pp, n}^{-\alpha}|h_{0n}|^2 + 
\sum_{n: \ T_{sn} \in \Phi_s }P_s d_{sp, n}^{-\alpha}|g_{0n}|^2} \le \beta_p\right),\\
\epsilon_p^{nc} + \Delta_p &=& P\left(\frac{P_pd_p^{-\alpha}|h_{00}|^2}{\sum_{n: \ T_{pn} \in \Phi_p \backslash \{T_{p0}\}}P_pd_{pp, n}^{-\alpha}|h_{0n}|^2 + 
\sum_{n: \ T_{sn} \in \Phi_s }P_s d_{sp, n}^{-\alpha}|g_{0n}|^2} \le \beta_p\right), \\
&\stackrel{(a)}= & 1- \bbE_{I_{pp}, I_{sp}} \left\{\exp\left(-\frac{\beta_p(P_p I_{pp} + P_sI_{sp})d_p^{\alpha}}{P_p}\right)\right\},\\
&\stackrel{(b)}= & 1- \int_{0}^{\infty} \exp\left(-\beta_p(I_{pp}=s )d_p^{\alpha}\right) f_{I_{pp}}(s) ds
\int_{0}^{\infty} \exp\left(-\frac{\beta_p( I_{sp}=t)P_sd_p^{\alpha}}{P_p}\right) f_{I_{sp}}(t) dt,
\end{eqnarray*}
\begin{eqnarray*}
&\stackrel{(c)}= & 1-{\cal L}_{I_{pp}}\left(\beta_pd_p^{\alpha}\right)  {\cal L}_{I_{sp}}\left(\frac{P_s\beta_pd_p^{\alpha}}{P_p}\right), \\
&\stackrel{(d)}= & 1-\exp\left(-\lambda_p c_1\beta_p^{\frac{2}{\alpha}}d_p^{2}\right)\exp\left(-\lambda_s c_1d_p^2\left(\frac{P_s\beta_p}{P_p}\right)^{\frac{2}{\alpha}}\right),
\end{eqnarray*}
where $(a)$ follows by letting $ I_{pp} \bydef \sum_{n:\ T_{pn} \in \Phi_p \backslash \{T_{p0}\}}d_{pp, n}^{-\alpha}|h_{0n}|^2$, and $I_{sp} \bydef \sum_{n:\ T_{sn} \in \Phi_s } d_{sp, n}^{-\alpha}|g_{0n}|^2$, and taking the expectation with respect to $|h_{00}|^2$ since $|h_{00}|^2$ is exponentially distributed, $(b)$ follows since $I_{pp}$ and $I_{sp}$ are independent, $(c)$ follows by defining 
${\cal L}_{I}(.)$ as the Laplace transform of $I$, and $(d)$ follows since ${\cal L}_{I_{pp}}\left(\beta_pd_p^{\alpha}\right) = \exp\left(-\lambda_p c_1\beta_p^{\frac{2}{\alpha}}d_p^{2}\right)$ \cite{Baccelli2006}, where $c_1$ is a constant.
Note that $\epsilon_p^{nc} = 1-\exp\left(-\lambda_p c_1\beta_p^{\frac{2}{\alpha}}d_p^{2}\right)$ \cite{Baccelli2006}, hence 
$\lambda_s = \frac{-\ln\left( \frac{1-\epsilon_p^{nc} -\Delta_p}{1-\epsilon_p^{nc}}\right)}{d_p^2} \left(\frac{P_p}{P_s\beta_p}\right)^{\frac{\alpha}{2}}$.

Next, we evaluate the value of $\lambda_s$ such that $P_{s,out} \le \epsilon_s$. By definition
\begin{eqnarray*}
P_{s, out} &=& P\left(SIR_s \le \beta_s\right), \\
& = & P\left(\frac{P_s d_s^{-\alpha}|q_{00}|^2}{\sum_{n:\ T_{sn} \in \Phi_s \backslash \{T_{s0}\}}P_sd_{ss, n}^{-\alpha}|q_{0n}|^2 + 
\sum_{n:\ T_{pn} \in \Phi_p }P_p d_{ps, n}^{-\alpha}|f_{0n}|^2} \le \beta_s\right). 
\end{eqnarray*}
Using the same analysis as above for finding $\lambda_s$ such that $P_{p,out} \le \epsilon_p^{nc}+\Delta_p$ we get 
$\lambda_s  = \frac{-\ln(1-\epsilon_s) - \lambda_p c_1d_s^{2} 
(\frac{P_s\beta_s}{P_s})^{\frac{2}{\alpha}}}{c_1\beta_s^{\frac{2}{\alpha}}d_s^{2}}$.

\section{}\label{app:misomisc}The interference received at secondary receiver $R_{s0}$ is 
\begin{eqnarray*}
I_{ss}+ I_{ps} & = &  \sum_{n:\ T_{sn} \in \Phi_s \backslash \{T_{s0}\}}P_sd_{ss, n}^{-\alpha}|\bq_{0n}\bu_n|^2 + 
\sum_{n:\ T_{pn} \in \Phi_p }P_p d_{ps, n}^{-\alpha}|f_{0n}|^2,\\
&=& \sum_{n:\ T_{sn} \in \Phi_s \backslash \{T_{s0}\}}\left(\frac{d_{ss, n}}{P_s^{\frac{1}{\alpha}}}\right)^{-\alpha}|\bq_{0n}\bu_n|^2 + 
\sum_{n:\ T_{pn} \in \Phi_p }\left(\frac{d_{ps, n}}{P_p^{\frac{1}{\alpha}}}\right)^{-\alpha} |f_{0n}|^2, \\
&\stackrel{(a)}=&  \sum_{n:\ T_{sn} \in \Phi'_c \backslash \{T_{s0}\}} \left(d^{'}_{ss,n}\right)^{-\alpha}|\bq_{0n}\bu_n|^2 + 
\sum_{n:\ T_{pn} \in \Phi^{'}_p }\left(d^{'}_{ps,n}\right)^{-\alpha} |f_{0n}|^2,\\
&\stackrel{(b)}=& \sum_{n:\ T_{sn} \in \Phi_{sp} \backslash \{T_{s0}\}} \left(d^{'}_{cc,n}\right)^{-\alpha}|\bq_{0n}\bu_n|^2,
\end{eqnarray*}
where in (a) $\Phi'_c$ is a PPP with intensity $\lambda_s P_s^{\frac{2}{\alpha}}$, and $\Phi'_p$ is a PPP with intensity $\lambda_p P_p^{\frac{2}{\alpha}}$, since scaling the distances in a PPP by $\frac{1}{a}$ increases the intensity of a PPP by $a^2$, and $(b)$ follows by defining $\Phi_{sp} = \{\Phi_p \cup \Phi_s\}$ with intensity $\lambda_p P_p^{\frac{2}{\alpha}} + \lambda_s P_s^{\frac{2}{\alpha}}$ since superposition of two PPP's is a PPP with intensity equal to the sum of two superposed PPP's,  and since $\bq_{0n}\bu_n$ and $f_{0n}$ are identically distributed from Lemma \ref{lem:sisosigdist}.

\section{}\label{app:miso}
Since we are interested in establishing the scaling behavior of the intensity of the secondary network with respect to $N$, we consider the case when both $N$ and $k$ are large enough.
First we find the value of $\lambda_p$ for which $P_{p,out } \le \epsilon_p^{nc}+\Delta_p$.

{\bf Lower Bound:}
The outage probability $P_{p, out}$ is 
\begin{eqnarray*}
P_{p, out} &=& \bbE_{C}\left\{P\left(\frac{P_pd_p^{-\alpha}|h_{00}|^2}{\sum_{n:\ T_{pn} \in \Phi_p \backslash \{T_{p0}\}}P_pd_{pp, n}^{-\alpha}|h_{0n}|^2 + 
\sum_{n:\ n>c, T_{sn} \in \Phi_s }P_s d_{sp, n}^{-\alpha}|g_{0n}|^2} \le \beta_p\right)\right\},\\
\epsilon_p + \Delta_p &=& \bbE_{C}\left\{P\left(\frac{P_pd_p^{-\alpha}|h_{00}|^2}{\sum_{n:\ T_{pn} \in \Phi_p \backslash \{T_{p0}\}}P_pd_{pp, n}^{-\alpha}|h_{0n}|^2 + 
\sum_{n:\ n>c, T_{sn} \in \Phi_s }P_s d_{sp, n}^{-\alpha}|g_{0n}|^2} \le \beta_p\right) | C< \left\lfloor k/m\right\rfloor\right\}\\
&&\times P\left(C< \left\lfloor k/m\right\rfloor\right) \\
&&+ \bbE_{C}\left\{P\left(\frac{P_pd_p^{-\alpha}|h_{00}|^2}{\sum_{n:\ T_{pn} \in \Phi_p \backslash \{T_{p0}\}}P_pd_{pp, n}^{-\alpha}|h_{0n}|^2 + 
\sum_{n:\ n>c, T_{sn} \in \Phi_s }P_s d_{sp, n}^{-\alpha}|g_{0n}|^2} \le \beta_p\right) | C\ge \left\lfloor k/m\right\rfloor\right\} \\
&&\times P\left(C\ge  \left\lfloor k/m\right\rfloor\right), \\
&\stackrel{(a)}\le & \delta + \bbE_{C}\left\{1- \bbE_{I_{pp}, I_{sp}^c} \left\{\exp\left(-\frac{\beta_p(P_p I_{pp} + P_sI^c_{sp})d_p^{\alpha}}{P_p}\right)\right\}| C\ge \left\lfloor k/m\right\rfloor\right\},\\
&\stackrel{(b)}= & \delta + \bbE_{C}\left\{ 1- \int_{0}^{\infty} \exp\left(-\beta_p(I_{pp}=s )d_p^{\alpha}\right) f_{I_{pp}}(s) ds \right. \\
&& \left.
 \int_{0}^{\infty} \exp\left(-\frac{\beta_p( I^x_{sp}=t)P_sd^{\alpha}}{P_p}\right) f_{I^x_{sp}}(t) dt |  C\ge \left\lfloor k/m\right\rfloor\right\},\\
&\stackrel{(c)}= & \delta + \bbE_{C}\left\{ 1-{\cal L}_{I_{pp}}\left(\beta_pd_p^{\alpha}\right) \left(1-P\left(\frac{P_pd_p^{-\alpha}|h_{00}|^2}{ 
\sum_{n:\ n>c, T_{sn} \in \Phi_s }P_s d_{sp, n}^{-\alpha}|g_{0n}|^2} \le \beta_p\right)\right) | C\ge \left\lfloor k/m\right\rfloor\right\}, \\
&\stackrel{(d)}\le &  \delta + \bbE_{C}
\left\{  1-\exp\left(-\lambda_p c_1\beta_p^{\frac{2}{\alpha}}d_p^{2}\right)
\left(1-(\pi \lambda_s)^{\frac{\alpha}{2}}\beta_p \left(\frac{P_s}{P_p}\right)d_p^{\alpha}\phi\left(\left(\frac{\alpha}{2}-1\right)^{-1}(c+1)^{1-\frac{\alpha}{2}} + c_3\right)\right) \right.\\
&& \left.
 | C\ge \left\lfloor k/m\right\rfloor\right\},\\
&= & \delta + 1-\exp\left(-\lambda_p c_1\beta_p^{\frac{2}{\alpha}}d_p^{2}\right)\\
&&+\exp\left(-\lambda_p c_1\beta_p^{\frac{2}{\alpha}}d_p^{2}\right)
(\pi \lambda_s)^{\frac{\alpha}{2}}\beta_p \left(\frac{P_s}{P_p}\right)d_p^{\alpha}\phi\left(\left(\frac{\alpha}{2}-1\right)^{-1}\bbE_{C}\left\{(c+1)^{1-\frac{\alpha}{2}}| C\ge \left\lfloor k/m\right\rfloor\right\} + c_3\right),\\
&\stackrel{(e)}\le &\delta+ \epsilon_p +\exp\left(-\lambda_p c_1\beta_p^{\frac{2}{\alpha}}d_p^{2}\right)
(\pi \lambda_s)^{\frac{\alpha}{2}}\beta_p \left(\frac{P_s}{P_p}\right)d_p^{\alpha}\phi\left(\left(\frac{\alpha}{2}-1\right)^{-1}(\left\lfloor k/m\right\rfloor+1)^{1-\frac{\alpha}{2}} + c_3\right),\end{eqnarray*}
where $(a)$ follows by letting $m\in \bbN$  such that $P\left(C< \left\lfloor k/m\right\rfloor\right) \le \delta, \ \delta \le \Delta_p$, and $m$ is independent of $k$, and $I_{sp}^c \bydef \sum_{n:\ n>c, T_{sn} \in \Phi_s } d_{sp, n}^{-\alpha}|g_{0n}|^2$, and $|h_{00}|^2$ is exponentially distributed, existence of $m\in \bbN$ such that $P\left(C< \left\lfloor k/m\right\rfloor\right) \le \Delta_p$ is guaranteed, since for large values of $k$ canceling only a few nearest secondary interferers has a very small probability, $(b)$ follows since $I_{pp}$ and $I_{sp}$ are independent, $(c)$ follows by defining 
${\cal L}_{I}(.)$ as the Laplace transform of $I$, and $(d)$ follows from the upper bound on outage probability \cite[Theorem 4]{Vaze2009}, $(e)$ follows since 
for $C\ge \left\lfloor k/m\right\rfloor, \bbE_{C}\left\{(c+1)^{1-\frac{\alpha}{2}}| C\ge \left\lfloor k/m\right\rfloor\right\} \le  (k+1)^{1-\frac{\alpha}{2}}$ for $\alpha >2$, and $\epsilon_p = 1-\exp\left(-\lambda_p c_1\beta_p^{\frac{2}{\alpha}}d_p^{2}\right)
$. Thus, 
$\lambda_s \ge \frac{1}{\pi}
\left(\frac{\Delta_p-\delta}{\exp\left(-\lambda_p c_1\beta_p^{\frac{2}{\alpha}}\right)d_p^{2}\beta_p \left(\frac{P_s}{P_p}\right)d_p^{\alpha}\phi
\left(\left(
\frac{\alpha}{2}-1\right)^{-1} (\left\lfloor k/m\right\rfloor+1)^{1-\frac{\alpha}{2}}+ c_3\right)^{\frac{2}{\alpha}}
} \right)$, 
and $\lambda_s = \Omega\left( k^{1-\frac{2}{\alpha}}\right)$.

{\bf Upper bound:}
To find an upper bound on $\lambda_s$, we consider the case when exactly $k$ nearest secondary interferers are canceled at each primary receiver using $k$ DOF for nulling by each secondary transmitter. This gives an upper bound since in general the the number of nearest interferers  canceled at each primary receiver is a random variable, and the performance is limited by those primary receivers that have less than $k$ nearest interferers canceled.
Thus, 
\begin{eqnarray*}
P_{p, out} &=& \bbE_{C}\left\{P\left(\frac{P_pd_p^{-\alpha}|h_{00}|^2}{\sum_{n:T_{pn} \in \Phi_p \backslash \{T_{p0}\}}P_pd_{pp, n}^{-\alpha}|h_{0n}|^2 + 
\sum_{n>c, T_{sn} \in \Phi_s }P_s d_{sp, n}^{-\alpha}|g_{0n}|^2} \le \beta_p\right)\right\},\\
\epsilon_p + \Delta_p &\stackrel{(a)}\ge&P\left(\frac{P_pd_p^{-\alpha}|h_{00}|^2}{\sum_{n:T_{pn} \in \Phi_p \backslash \{T_{p0}\}}P_pd_{pp, n}^{-\alpha}|h_{0n}|^2 + 
\sum_{n>k, T_{sn} \in \Phi_s }P_s d_{sp, n}^{-\alpha}|g_{0n}|^2} \le \beta_p\right) \\
&= & 1- \bbE_{I_{pp}, I_{sp}^x} \left\{\exp\left(-\frac{\beta_p(P_p I_{pp} + P_sI^k_{sp})d_p^{\alpha}}{P_p}\right)\right\},\\
&= &  1- \int_{0}^{\infty} \exp\left(-\beta_p(I_{pp}=s )d_p^{\alpha}\right) f_{I_{pp}}(s) ds   \int_{0}^{\infty} \exp\left(-\frac{\beta_p( I^k_{sp}=t)P_sd^{\alpha}}{P_p}\right) f_{I^x_{sp}}(t) dt,\\
&= &1-{\cal L}_{I_{pp}}\left(\beta_pd_p^{\alpha}\right) \left(1-P\left(\frac{P_pd_p^{-\alpha}|h_{00}|^2}{ 
\sum_{n>k, T_{sn} \in \Phi_s }P_s d_{sp, n}^{-\alpha}|g_{0n}|^2} \le \beta_p\right)\right) , \\
&\stackrel{(b)}\ge & 1-\exp\left(-\lambda_p c_1\beta_p^{\frac{2}{\alpha}}d_p^{2}\right)\frac{\left(k+\frac{5}{8}+\frac{\alpha}{4}\right)^{\frac{\alpha}{2}}}{d^{\alpha}\beta (\pi\lambda)^{\frac{2}{\alpha}}}, \end{eqnarray*}
 where $(a)$ follows from the fact that canceling exactly $k$ nearest secondary interferers at each primary receiver using $k$ DOF for nulling by each secondary transmitter provides the best performance, and $(b)$ follows from the lower  bound on outage probability \cite[Theorem 4]{Vaze2009}.
 Thus, from the lower and upper bound
 \begin{equation}\label{eq:lbubmiso}
 \lambda_s = {\cal O}(k), \text{ and} \ \lambda_s = \Omega(k^{1-\frac{2}{\alpha}}).
\end{equation}
Next, we evaluate the maximum $\lambda_s$ such that  $P_{s,out} \le \epsilon_s$. By definition \begin{eqnarray*}
P_{s, out} &=&  P\left(\frac{P_s d_s^{-\alpha}|\bq_{00}\bu_{0}|^2}{\sum_{n:\ T_{sn} \in \Phi_s \backslash \{T_{s0}\}}P_sd_{ss, n}^{-\alpha}|\bq_{0n}\bu_n|^2 + 
\sum_{n:\ T_{pn} \in \Phi_p }P_p d_{ps, n}^{-\alpha}|f_{0n}|^2} \le \beta_s\right), \\
&\stackrel{(a)}=&  P\left(\frac{P_s d_s^{-\alpha}|\bq_{00}\bu_0|^2}
{\sum_{n:\ T_{n} \in \Phi \backslash \{T_{s0}\}}d_{n}^{-\alpha}|\bq_{0n}\bu_0|^2 } \le \beta_s\right), \\
&\stackrel{(b)}\le & c_5 (\lambda_p P_p^{\frac{2}{\alpha}} + \lambda_s P_s^{\frac{2}{\alpha}})\left(\frac{\beta_s}{d_{c}(N-k)}\right)^{\frac{2}{\alpha}},\\
&\stackrel{(c)}\ge& c_6 (\lambda_p P_p^{\frac{2}{\alpha}} + \lambda_s P_s^{\frac{2}{\alpha}})\left(\frac{\beta_s}{d_{c}(N-k)}\right)^{\frac{2}{\alpha}}
\end{eqnarray*}
where $(a)$ follows from Lemma \ref{lem:sumintf} by letting $\Phi = \{\Phi_s \cup \Phi_p\}$ and Lemma \ref{lem:sisointdist}, since $|\bq_{0n}\bu_n|^2$ is Chi-square distributed with $2$ DOF similar to $|f_{0n}|^2$, and $(b)$ and $(c)$ follows from \cite{Hunter2008}, since $|\bq_{00}\bu_0|^2$ is a Chi-square random variable with $2(N-k)$ DOF Lemma \ref{lem:sisosigdist} for constants $c_5$ and $c_6$. Thus, considering the outage probability constraint of $\epsilon_s$ for the secondary network, 
 \begin{equation}\label{eq:lbubsecmiso}
 \lambda_s = \Theta\left((N-k)^{\frac{2}{\alpha}}\right).
 \end{equation}

Combining (\ref{eq:lbubmiso}) and (\ref{eq:lbubsecmiso}), $\lambda_s^{\star} = \Omega\left(\min\left\{k^{1-\frac{2}{\alpha}}, (N-k)^{\frac{2}{\alpha}}\right\}\right)$, and 
$\lambda_s^{\star} = {\cal O}\left(\min\left\{N,(N-k)^{\frac{2}{\alpha}}\right\}\right)$.
Hence $k=\theta N, \theta \in (0,1]$, provides the best scaling of the intensity of the secondary network and results in $\lambda_s^{\star} = \Omega\left(\min\{N^{1-\frac{2}{\alpha}}, N^{\frac{2}{\alpha}}\}\right)$, and $\lambda_s^{\star} =  {\cal O}\left((N-k)^{\frac{2}{\alpha}}\right)$.

\section{}\label{app:mimo}
Considering the outage probability at any primary receiver when each secondary transmitter uses $N-1$ DOF for nulling, from Theorem \ref{thm:Ntx1}
\begin{equation}
\label{eq:scalingmimoprim}
\lambda_s = \Omega\left( N^{1-\frac{2}{\alpha}}\right), \text{and} \ \ \lambda_s = {\cal O}\left( N\right).
\end{equation}
Next, we evaluate the maximum $\lambda_s$ that satisfies the outage probability constraint of $\epsilon_s$ for the secondary network with rate $R_p$ bits/sec/Hz for each transmission.  Using Lemma \ref{lem:2repcogintf}, the outage probability for the secondary pair $T_{s0} R_{s0}$ is 
\begin{eqnarray*}
P_{s, out} &\stackrel{(a)}=&  P\left(\frac{P_s d_s^{-\alpha}|\bt_0^*\bQ_{00}\bu_0|^2 }{\sum_{n>m, \ T_{n} \in \Phi \backslash \{T_{0}\}}P_nd_{n}^{-\alpha} I^{0n}} \le \beta_s\right), \\
&\stackrel{(b)}\ge&  1- \frac{(M-m)(m+1+c_4)^{\frac{2}{\alpha}}}{\frac{d_p^{\alpha}\beta_s }{P_s}(\pi (\lambda_s+\lambda_p))^{\frac{\alpha}{2}}}\left(\frac{\lambda_p}{P_p(\lambda_p+\lambda_s)} + \frac{\lambda_s}{P_s(\lambda_p+\lambda_s)}\right)\\
&\stackrel{(c)}\le & \frac{(\pi (\lambda_p+\lambda_s))^{\frac{\alpha}{2}}\beta_p d_p^{\alpha}\left(\left(\frac{\alpha}{2}-1\right)^{-1}(m+1)^{1-\frac{\alpha}{2}} + c_3\right)}{M-m-1}\left(\frac{\lambda_pP_p}{\lambda_p+\lambda_s} + \frac{\lambda_sP_s}{\lambda_p+\lambda_s}\right),
\end{eqnarray*}
where in $(a)$ $\Phi= \{\Phi_s \cup \Phi_p\}$ Lemma \ref{lem:2repcogintf}, and $I^{0n}$ is a Chi-square distributed random variable with $2$ DOF (Lemma \ref{lem:sigpowmimo}), and $(b)$, and $(c)$  follow from \cite{Vaze2009} and \cite{Jindal2008a}, respectively, with $|\bt_0^*\bQ_{00}\bu_0|^2$ Chi-square distributed  with $2(M-m)$ DOF (Lemma \ref{lem:sigpowmimo}),  and after taking the expectation with respect to  $P_n$.
Thus, using $m = \theta M, \ \theta \in (0,1],$ provides the best scaling of the intensity and results in
\begin{equation}
\label{eq:scalingmimocog}
\lambda_s = \Omega (M), \text{and} \ \ \lambda_s = {\cal O}(M^{1+\frac{2}{\alpha}}). 
\end{equation}
Hence combining (\ref{eq:scalingmimoprim}), and (\ref{eq:scalingmimocog}), $\lambda_s^{\star} =  \Omega\left(\min\{M, N^{1-\frac{2}{\alpha}} \}\right) $, and $\lambda_s^{\star} = {\cal O}\left(\min\left\{N, M^{1+\frac{2}{\alpha}}\right\}\right)$.

\bibliographystyle{IEEEtran}
\bibliography{IEEEabrv,Research}

\begin{figure}
\centering
\includegraphics[width=3.5in]{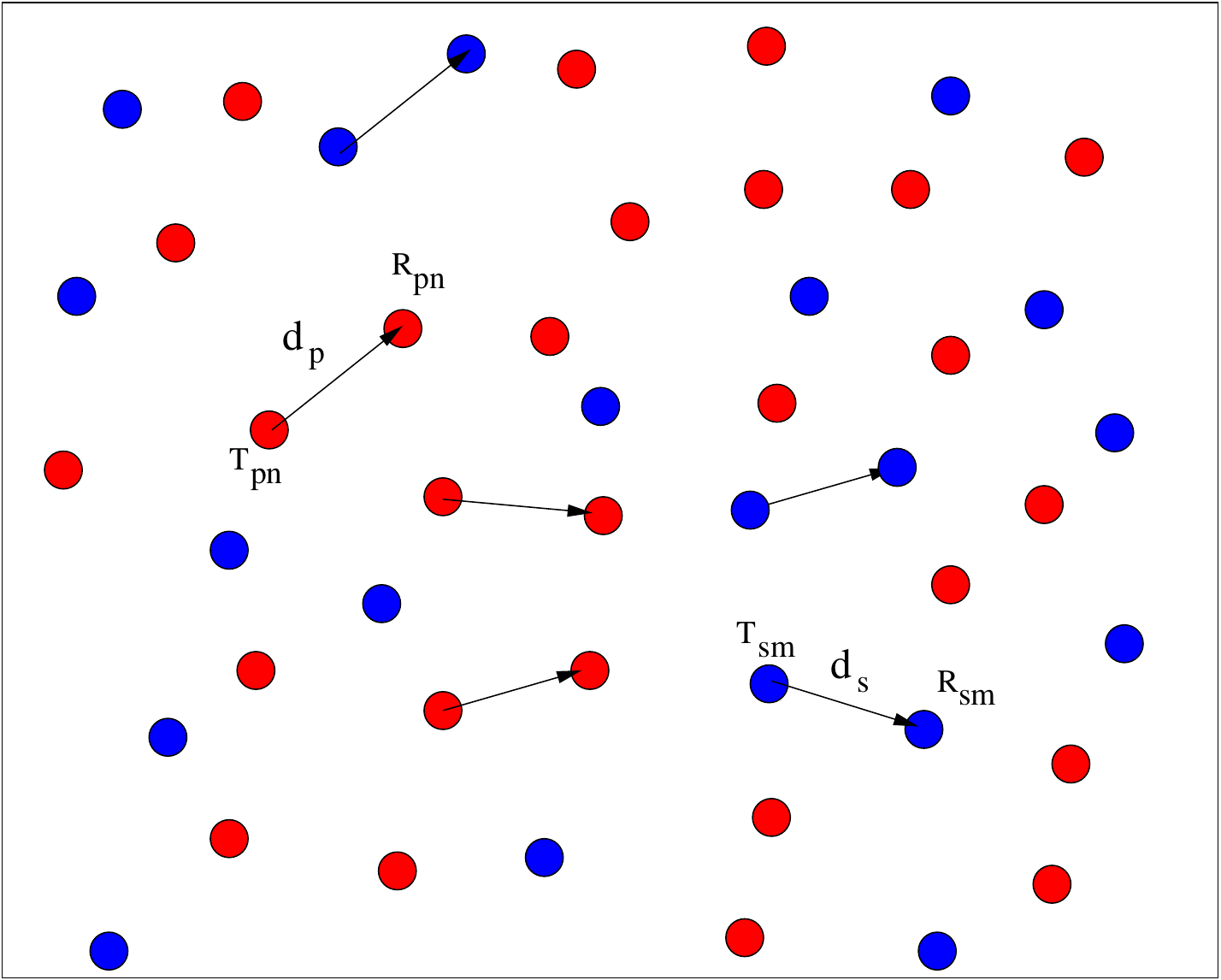}
\caption{Schematic of the coexisting ad hoc networks.}
\label{fig:blkdiag}
\end{figure}

\begin{figure}
\centering
\includegraphics[width=4.5in]{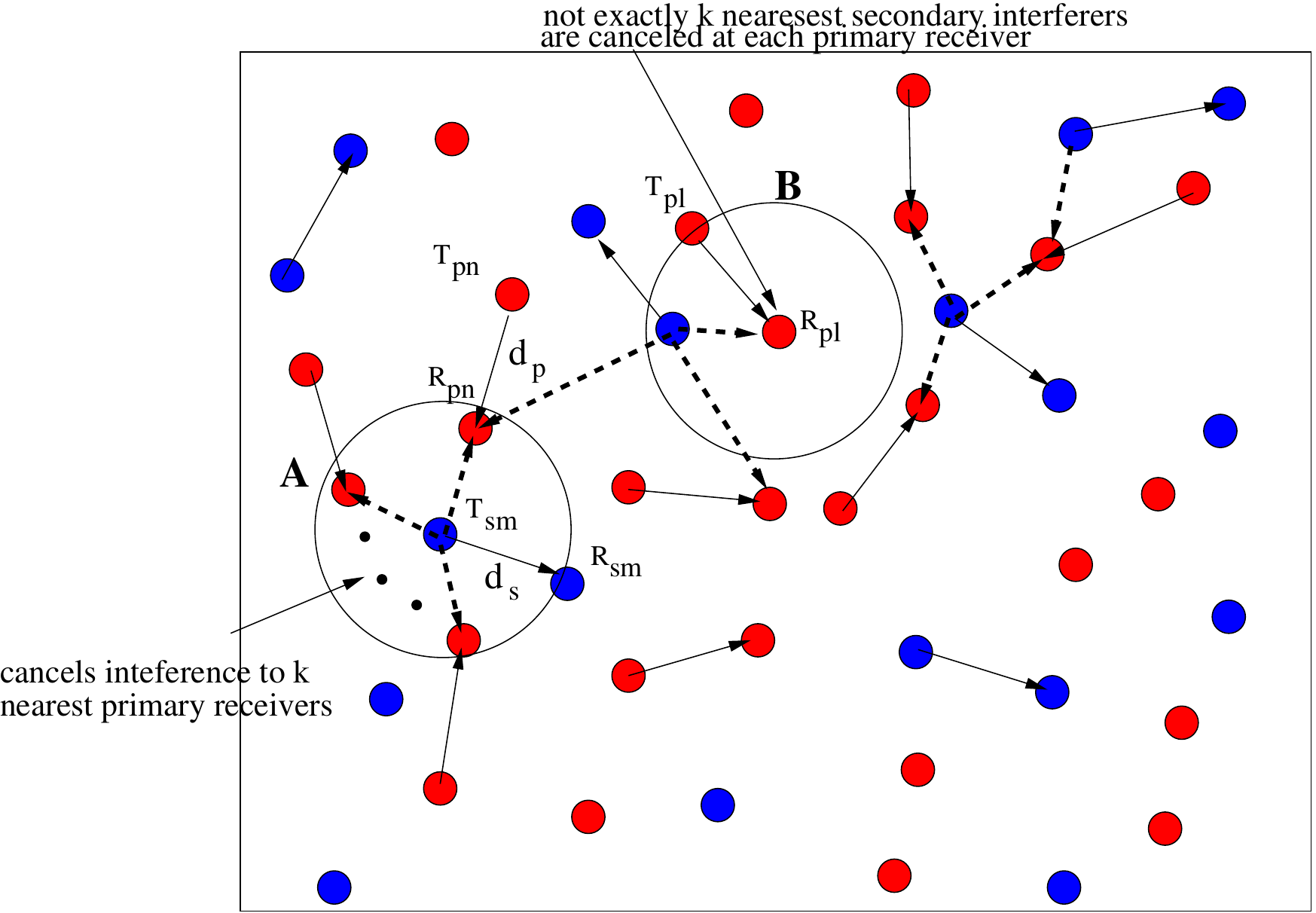}
\caption{Illustration of interference nulling by secondary transmitters towards their nearest primary receivers.}
\label{fig:nearcanc}
\end{figure}

\begin{figure}
\centering
\includegraphics[width=6.5in]{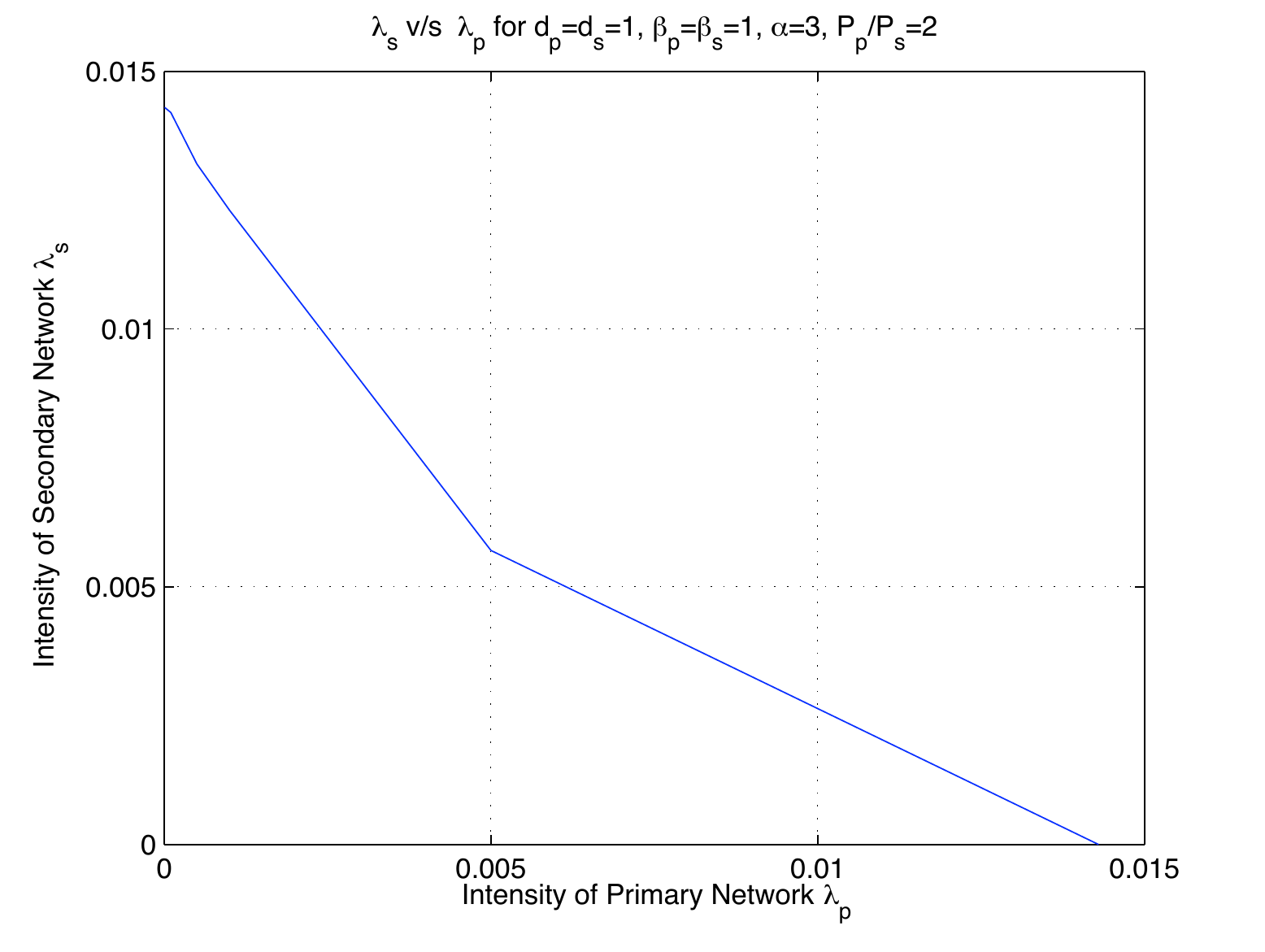}
\caption{Transmission capacity of the secondary network v/s transmission capacity of the primary  network.}
\label{fig:svp}
\end{figure}

\begin{figure}
\centering
\includegraphics[width=6.5in]{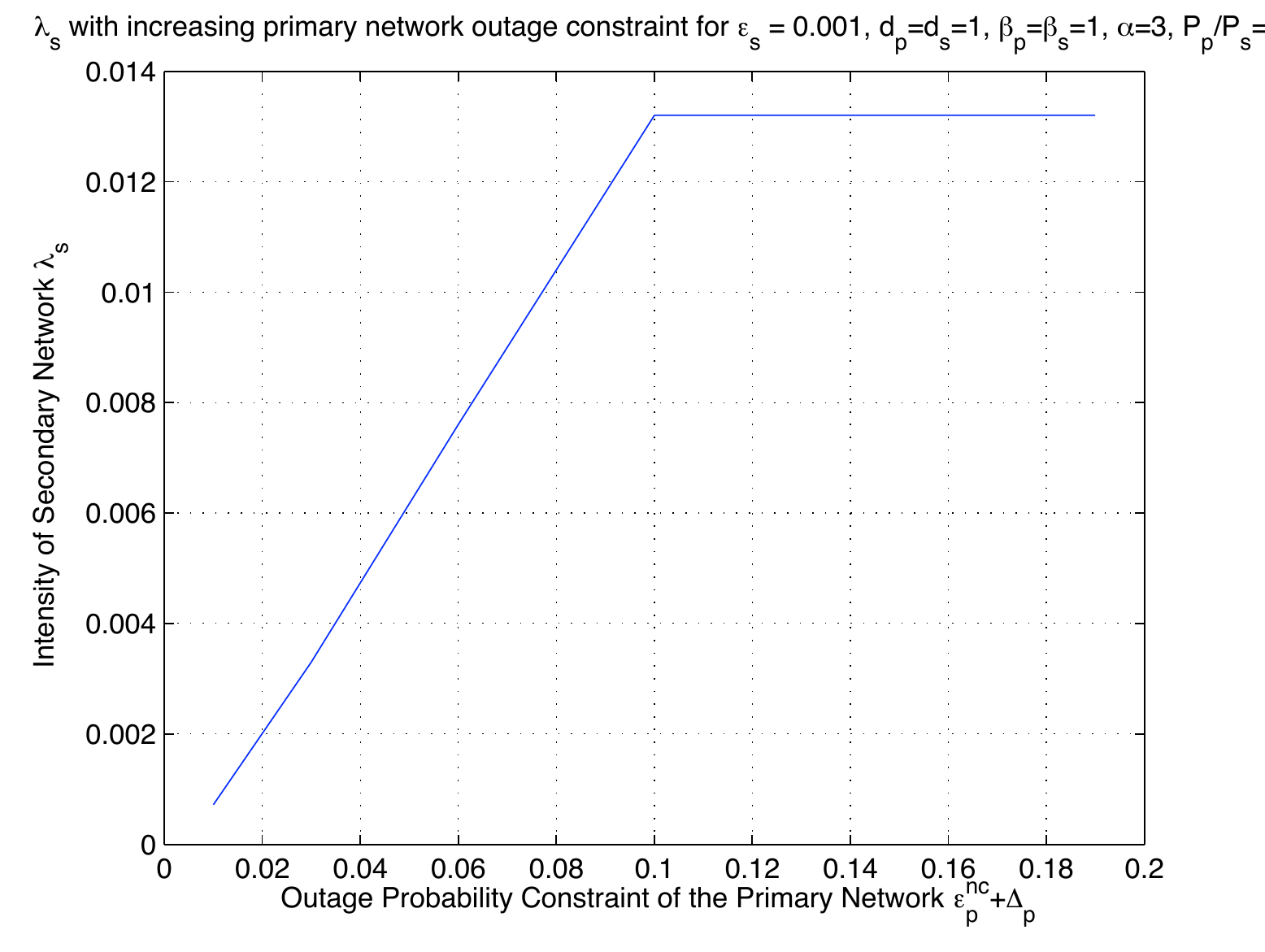}
\caption{Transmission capacity of the secondary network with increasing primary outage probability constraint.}
\label{fig:svep}
\end{figure}

\begin{figure}
\centering
\includegraphics[width=4.5in]{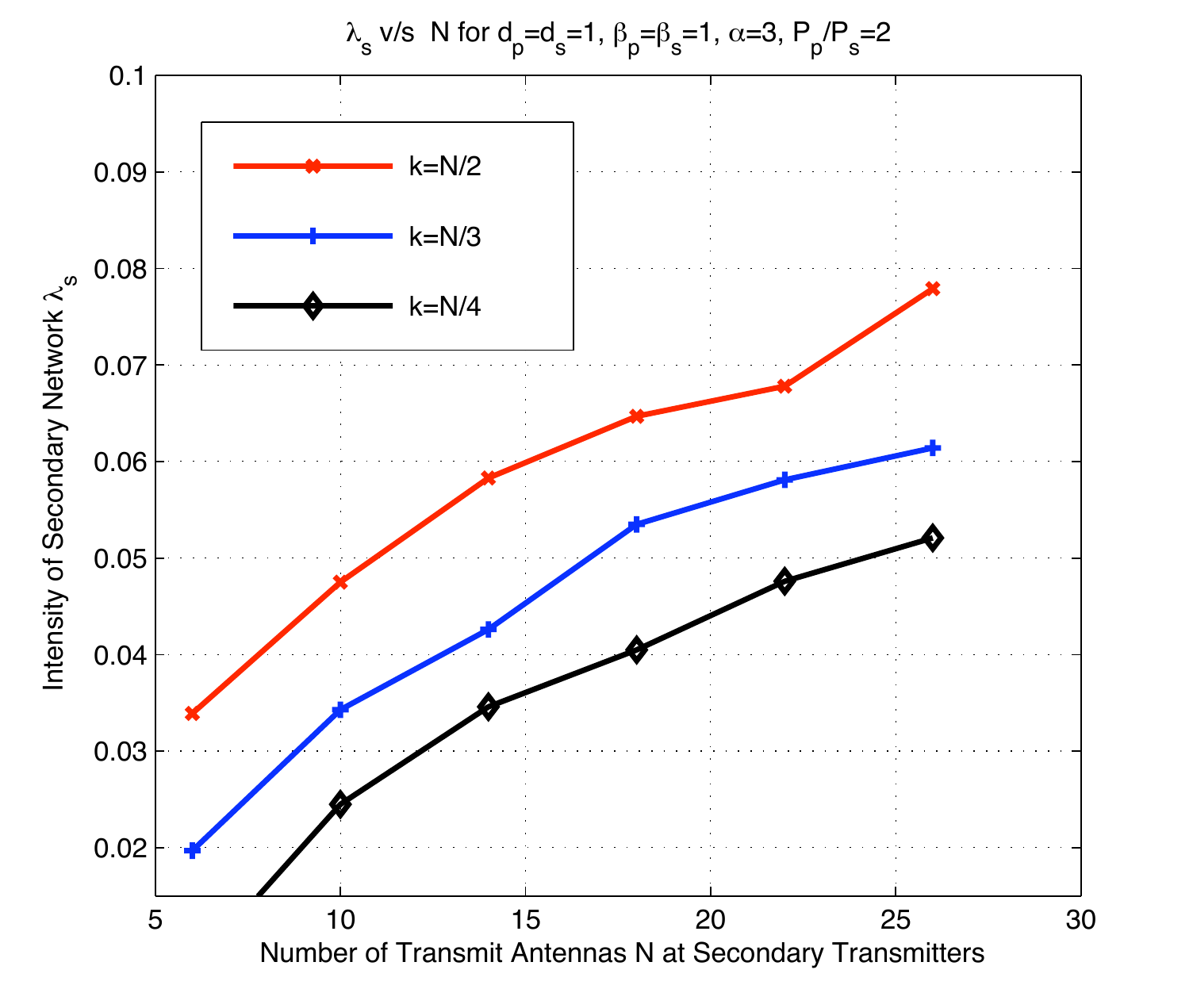}
\caption{Transmission capacity of the secondary network with multiple transmit antennas and $M=1$.}
\label{fig:smiso}
\end{figure}

\begin{figure}
\centering
\includegraphics[width=4.5in]{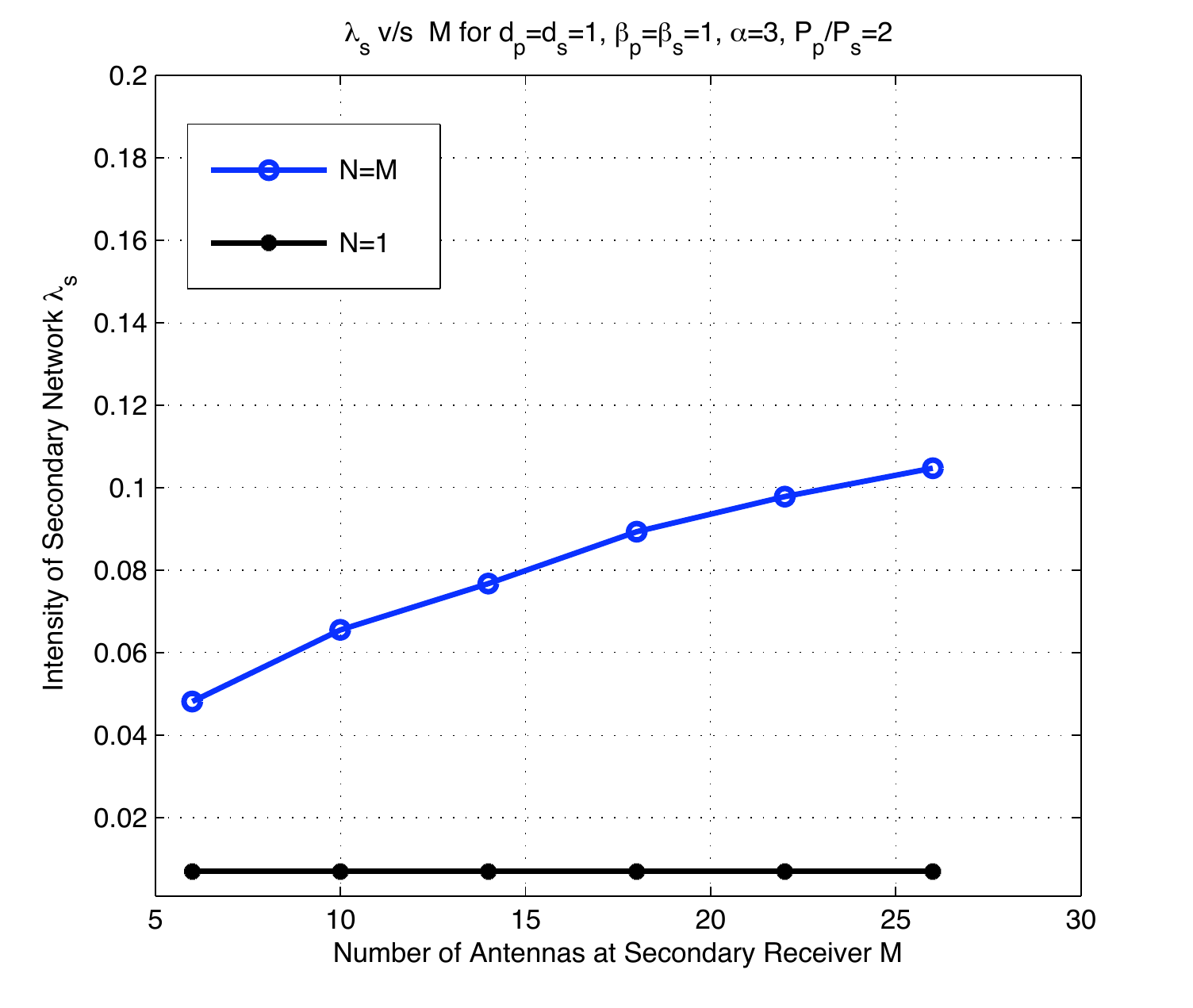}
\caption{Transmission capacity of the secondary network with multiple transmit and receive antennas.}
\label{fig:smimo}
\end{figure}

\end{document}